\DeclareMathOperator{\Tr}{Tr}
\DeclareMathOperator{\Ex}{Ex}
\DeclareMathOperator{\Var}{Var}
\renewcommand{\Re}{{\rm Re}}
\renewcommand{\Im}{{\rm Im}}
\newcommand{\C}{{\mathbb C}}
\newcommand{\KK}{{\cal K}}
\newcommand{\HH}{{\cal H}}
\newcommand{\TT}{{\cal T}}
\newcommand{\LLL}{{\cal L}}
\newcommand{\EE}{{\cal E}}
\newcommand{\ID}{{\bf 1}}
\newcommand{\df}{\delta }
\newcommand{\ket}[1]{|{#1}\rangle}
\newcommand{\kb}[1]{|{#1}\rangle\!\langle{#1}|}
\newcommand{\bk}[1]{\langle{#1}|{#1}\rangle}
\newcommand{\bra}[1]{\langle{#1}|}
\newcommand{\bkt}[2]{\langle{#1}|{#2}\rangle}
\newcommand{\kbt}[2]{|{#1}\rangle\!\langle{#2}|}
\newcommand{\Ref}[1]{(\ref{#1})}
\newcommand{\wv}[2]{\langle{#1}\rangle_{#2}}
\newtheorem{thm}{Theorem}[section]
\newtheorem{col}[thm]{Corollary}
\begin{document}
\title{Theory of ``Weak Value" and Quantum Mechanical Measurements}
\author{Yutaka Shikano~\thanks{email: yshikano@ims.ac.jp} \\ Department of Physics, Tokyo Institute of Technology, Tokyo, Japan~\thanks{My current 
affiliation is Institute for Molecular Science located at Okazaki, Aichi, Japan.} \\
Center of Quantum Studies, Schmid College of Science and Technology, \\ Chapman University, CA, USA}
\date{\today}
\maketitle
\tableofcontents
\section{Introduction}
	Quantum mechanics provides us many perspectives and insights on Nature and our daily life. However, 
	its mathematical axiom initiated by von Neumann~\cite{neumann55} is not satisfied to describe nature phenomena. 
	For example, it is impossible not to explain a non self-adjoint operator, i.e., the momentum operator on a half line (See, e.g., Ref.~\cite{SH2}.), 
	as the physical observable. On considering foundations of quantum mechanics, the simple and specific expression is needed. One of the candidates is 
	the {\it weak value} initiated by Aharonov and his colleagues~\cite{AAV}. It is remarked that the idea of their seminal work is written in Ref.~\cite{AACV}. 
	Furthermore, this quantity has a potentiality 
	to explain the counter-factual phenomena, in which there is the contradiction under the 
	classical logic, e.g., the Hardy paradox~\cite{HARDY}. If so, it may be possible to quantitatively explain 
	quantum mechanics in the particle picture. In this review based on the author thesis~\cite{shikano_phd}, we consider the theory of the weak value and construct 
	a measurement model to extract the weak value. See the other reviews in Refs.~\cite{AR,AT,AV08,ATS}.
	
	Let the weak value for an observable $A$ be defined as 
	\begin{equation}
		\,_{f}\wv{A}{i}^{w} := \frac{\bra{f} A \ket{i}}{\bkt{f}{i}},
	\end{equation}
	where $\ket{i}$ and $\ket{f}$ are called a pre- and post-selected state, respectively. As the naming of 
	the ``weak value", this quantity is experimentally accessible by the weak measurement as explained below. 
	As seen in Fig.~\ref{wvfigfig}, the weak value 
	can be measured as the shift of a meter of the probe after the weak interaction 
	between the target and the probe with the specific post-selection of the target. 
	Due to the weak interaction, the quantum state of the target is only slightly changed but 
	the information of the desired observable $A$ is encoded in the probe by the post-selection.
	While the previous studies of the weak value since the seminal paper~\cite{AAV}, 
	which will be reviewed in Sec.~\ref{wvrev_sec}, 
	are based on the measurement scheme, there are few works that the weak value is focused on and 
	is independent of the measurement scheme. Furthermore, in these 20 years, we have not yet understood 
	the mathematical properties of the weak value. In this chapter, we review the historical backgrounds 
	of the weak value and the weak measurement and recent development on the measurement model 
	to extract the weak value.
\begin{figure}[ht]
	\begin{center}
	\includegraphics[width=10cm]{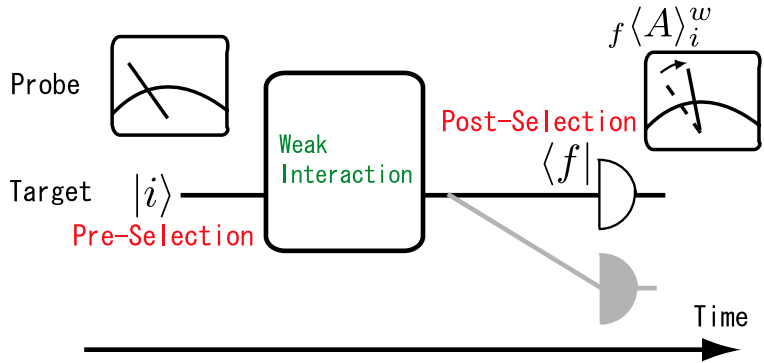}
	\caption{Schematic figure of the weak measurement.}
	\label{wvfigfig}
	\end{center}
\end{figure}

\section{Review of Quantum Operation} \label{qo_sec}
The time evolution for the quantum state and the operation for the measurement 
are called a quantum operation. In this section, we review a general description of the quantum operation.
Therefore, the quantum operation can describe the time evolution for 
the quantum state, the control of the quantum state, the quantum measurement, and the noisy 
quantum system in the same formulation.
\subsection{Historical Remarks} \label{dqm_sec}
Within the mathematical postulates of quantum mechanics~\cite{neumann55}, 
the state change is subject to the Schr\"{o}dinger equation. However, 
the state change on the measurement is not subject to this but is subject to 
another axiom, conventionally, von Neumann-L\"{u}ders projection postulate~\cite{Luders}. 
See more details on quantum measurement theory in the books~\cite{braginsky92, busch91, WM10}.

Let us consider a state change from the initial state $\ket{\psi}$ 
on the {\it projective measurement}~\footnote{This measurement is often called the 
{\it von Neumann measurement} or the {\it strong measurement}.} for the operator 
$A = \sum_j a_j \kb{a_j}$. From the Born rule, the probability to obtain the 
measurement outcome, that is, the eigenvalue of the observable $A$, is given by
\begin{equation}
	\Pr [A = a_m] = | \bkt{a_m}{\psi} |^2 = \Tr \left[ \kb{\psi} \cdot \kb{a_m} \right] = \Tr \rho P_{a_m},
\end{equation}
where $\rho := \kb{\psi}$ and $P_{a_m} = \kb{a_m}$. 
After the measurement with the measurement outcome $a_m$, the quantum state change is given by 
\begin{equation}
	\ket{\psi} \to \ket{a_m},
\end{equation}
which is often called the ``collapse of wavefunction" or ``state reduction". This implies 
that it is necessary to consider the {\bf non-unitary process} even in the isolated system. 
To understand the measuring process as quantum dynamics, we need consider the general theory 
of quantum operations. 
\subsection{Operator-Sum Representation}
Let us recapitulate the general theory of quantum operations of a 
finite dimensional quantum system~\cite{NC}. All physically realizable 
quantum operations can be generally described by a completely positive (CP) map~\cite{ozawa84,ozawa89}, 
since the isolated system of a target system and an auxiliary system always undergoes 
the unitary evolution according to the axiom of quantum mechanics~\cite{neumann55}. 
Physically speaking, the operation of the target system should be described as a 
positive map, that is, the map from the positive operator to the positive operator, since 
the density operator is positive. Furthermore, if any auxiliary system is coupled to the 
target one, the quantum dynamics in the compound system should be also described as the 
positive map since the compound system should be subject to quantum mechanics. Given the 
positive map, the positive map is called a CP map if and only if the positive map 
is also in the compound system coupled to any auxiliary system.
One of the important aspects of the CP map is that all 
physically realizable quantum operations can be described only by operators 
defined in the target system. Furthermore, the auxiliary system can be environmental system, 
the probe system, and the controlled system. Regardless to the role of the 
auxiliary system, the CP map gives the same description for the target system. 
On the other hand, both quantum measurement and decoherence give the same 
role for the target system. 

Let $\EE$ be a positive map from $\LLL(\HH_s)$, a set of linear operations on 
the Hilbert space $\HH_s$, to $\LLL(\HH_s)$. 
If $\EE$ is completely positive, its trivial extension $\KK$ from $\LLL(\HH_s)$ to $\LLL(\HH_s \otimes \HH_e)$ is also positive such that 
\begin{equation}
	\KK (\ket{\alpha}) := (\EE \otimes \ID) (\kb{\alpha}) > 0,
	\label{sig}
\end{equation}
for an arbitrary state $\ket{\alpha} \in \HH_s \otimes \HH_p$, where $\ID$ is the identity operator. 
We assume without loss of generality ${\rm dim}\HH_s = {\rm dim}\HH_e < \infty$. Throughout this chapter, we concentrate 
on the case that the target state is pure though the generalization to mixed states 
is straightforward. From the complete positivity, we obtain the following theorem for 
quantum state changes.
\begin{thm}
Let $\EE$ be a CP map from $\HH_s$ to $\HH_s$.
For any quantum state $\ket{\psi}_s \in \HH_s$, there exist a map $\sigma$ and a pure state 
$\ket{\alpha} \in \HH_s \otimes \HH_e$ such that 
\begin{equation}
	\EE (\ket{\psi}_s \bra{\psi})= \,_e\bra{\tilde{\psi}} \KK (\ket{\alpha}) \ket{\tilde{\psi}}_e,
	\label{schmidt}
\end{equation}
where
\begin{equation}
	\ket{\psi}_s = \sum_{k} \psi_{k} \ket{k}_{s}, \ \ \ 
	\ket{\tilde{\psi}}_e = \sum_{k} \psi^{\ast}_{k} \ket{k}_{e},
\end{equation}
which represents the state change for the density operator.
\label{1th}
\end{thm}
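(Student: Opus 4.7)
The plan is to exhibit an explicit candidate for the auxiliary state---namely the (unnormalized) maximally entangled vector $\ket{\alpha} = \sum_k \ket{k}_s \otimes \ket{k}_e$ built from the same Schmidt basis $\{\ket{k}\}$ used to expand $\ket{\psi}_s$---and then verify \Ref{schmidt} by direct computation. This is the standard channel--state duality (Choi--Jamio\l{}kowski isomorphism) phrased as a concrete identity: the whole action of $\EE$ is encoded in $\KK(\ket{\alpha})$, and $\EE(\kb{\psi})$ is recovered by contracting with a vector on the auxiliary side that depends on $\ket\psi$.

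First I would expand $\kb{\alpha} = \sum_{k,l} \kbt{k}{l}_s \otimes \kbt{k}{l}_e$ in the chosen product basis. Because $\KK = \EE \otimes \ID$ acts trivially on the $e$-factor, linearity gives
\begin{equation}
\KK(\ket{\alpha}) \;=\; \sum_{k,l} \EE(\kbt{k}{l}_s) \otimes \kbt{k}{l}_e .
\end{equation}
Next I would sandwich this between $\bra{\tilde\psi}_e$ and $\ket{\tilde\psi}_e$. Since $\ket{\tilde\psi}_e = \sum_m \psi_m^\ast \ket{m}_e$ yields $\bkt{\tilde\psi}{k}_e = \psi_k$ and $\bkt{l}{\tilde\psi}_e = \psi_l^\ast$, the $e$-side factor collapses the double sum to $\sum_{k,l}\psi_k\psi_l^\ast\,\EE(\kbt{k}{l}_s)$. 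Pulling the scalars inside $\EE$, again by linearity (which is part of the definition of a CP map), gives $\EE\!\left(\sum_{k,l}\psi_k\psi_l^\ast \kbt{k}{l}_s\right) = \EE(\kb{\psi})$, which is exactly \Ref{schmidt}.

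The proof is essentially a bookkeeping exercise; there is no real obstacle, only two points worth flagging. The first is the placement of complex conjugates: the tilde-vector must be $\ket{\tilde\psi}_e = \sum_k \psi_k^\ast \ket{k}_e$ rather than $\sum_k \psi_k \ket{k}_e$, so that $\kb{\psi} = \sum_{k,l}\psi_k\psi_l^\ast \kbt{k}{l}$ is reproduced with the correct coefficients. The second is the identification of the ``map $\sigma$'' mentioned in the statement with $\KK$ itself (this appears to be a notational inconsistency in the theorem statement)---no further construction beyond the trivially extended map is needed. Complete positivity of $\EE$ is not actually used to establish the algebraic identity \Ref{schmidt}; it enters only to guarantee that $\KK(\ket{\alpha})$ is a genuine positive operator on $\HH_s\otimes\HH_e$, so that the right-hand side has a physical interpretation as a partial expectation value of a positive operator.
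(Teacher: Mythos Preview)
Your proposal is correct and follows essentially the same route as the paper: both take $\ket{\alpha}=\sum_k\ket{k}_s\ket{k}_e$, expand $\KK(\ket{\alpha})=\sum_{k,l}\EE(\kbt{k}{l}_s)\otimes\kbt{k}{l}_e$, and then contract on the environment side to recover $\EE(\kb{\psi})$ by linearity. Your side remarks---that the ``map $\sigma$'' in the statement is just $\KK$, and that complete positivity is needed only for the positivity interpretation of $\KK(\ket{\alpha})$ rather than for the algebraic identity---are also accurate.
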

\begin{proof}
We can write in the Schmidt form as
\begin{equation}
	\ket{\alpha} = \sum_m \ket{m}_s \ket{m}_e.
\end{equation}
We rewrite the right hand sides of Eq. \Ref{schmidt} as 
\begin{align}
	\KK (\ket{\alpha}) & = (\EE \otimes \ID) \left( \sum_{m,n} \ket{m}_{s} \ket{m}_{e} 
	\,_{s}\bra{n} _{e}\bra{n} \right) \notag \\
	& = \sum_{m,n} \ket{m}_{e} \bra{n} \EE (\ket{m}_s \bra{n}),
\end{align}
to obtain
\begin{equation}
	\,_{e}\bra{m} \KK (\ket{\alpha}) \ket{n}_{e} = \EE (\ket{m}_{s} \bra{n}).
\end{equation}
By linearity, the desired equation (\ref{schmidt}) can be derived.
\end{proof}

From the complete positivity, $\KK (\ket{\alpha}) > 0$ for all 
$\ket{\alpha} \in \HH_s \otimes \HH_e$, we can express $\sigma (\ket{\alpha})$ as 
\begin{equation} \label{complete}
	\KK (\ket{\alpha}) = \sum_{m} s_{m} \kb{\hat{s}_{m}} = \sum_{m} \kb{s_{m}},
\end{equation}
where $s_m$'s are positive and $\{ \ket{\hat{s}_m} \}$ is a complete orthonormal 
set with $\ket{s_m}:=\sqrt{s_m} \ket{\hat{s}_m}$. 
We define the {\it Kraus operator} $E_m$~\cite{KRAUS} as 
\begin{equation}
	E_{m} \ket{\psi}_s := \,_{e}\bkt{\tilde{\psi}}{s_m}.
\end{equation}
Then, the quantum state change becomes the operator-sum representation, 
\begin{equation}
	\sum_{m} E_m \ket{\psi}_s \bra{\psi} E^{\dagger}_{m} = \sum_{m} 
	\,_{e}\bkt{\tilde{\psi}}{s_m} \bkt{s_m}{\tilde{\psi}}_{e}
	= \,_{e}\bra{\tilde{\psi}} \KK (\ket{\alpha}) \ket{\tilde{\psi}}_{e} \notag \\
	= \EE (\ket{\psi}_{s} \bra{\psi}).
\end{equation}
It is emphasized that the quantum state change is described solely in 
terms of the quantities of the target system.
\subsection{Indirect Quantum Measurement}~\label{iqm_sec}
In the following, the operator-sum representation of the quantum state change is related to the indirect 
measurement model. Consider the observable $A_s$ and $B_p$ for the target and probe systems given by 
\begin{equation}
	A_s = \sum_j a_j \ket{a_j}_s \bra{a_j}, \ \ \ 
	B_p = \sum_j b_j \ket{b_j}_p \bra{b_j},
\end{equation}
respectively. We assume that the interaction Hamiltonian is given by 
\begin{equation}
	H_{int} (t) = g ( A_s \otimes B_p) \ \df (t - t_0), 
\end{equation}
where $t_0$ is measurement time. Here, without loss of generality, the interaction 
is impulsive and the coupling constant $g$ is scalar. The quantum dynamics for 
the compound system is given by 
\begin{equation}
	\kb{s_{m}} = U (\ket{\psi}_{s} \bra{\psi} \otimes \ket{\phi}_{p} \bra{\phi}) U^{\dagger},
\end{equation}
where $\ket{\psi}_{s}$ and $\ket{\phi}_{p}$ are the initial quantum state on the target and 
probe systems, respectively. For the probe system, we perform the projective measurement for 
the observable $B_p$. The probability to obtain the measurement outcome $b_m$ is given by 
\begin{align}
	\Pr [B_p = b_m] & = \Tr_s \bra{b_m} U (\ket{\psi}_{s} \bra{\psi} \otimes \ket{\phi}_{p} 
	\bra{\phi}) U^{\dagger} \ket{b_m}, \notag \\
	& = \Tr_s E_m \ket{\psi}_{s} \bra{\psi} E^{\dagger}_m = \Tr_s \ket{\psi}_{s} \bra{\psi} M_m,
\end{align}
where the Kraus operator $E_m$ is defined as 
\begin{equation} \label{krausreview}
	E_m := \,_p\bra{b_m} U \ket{\phi}_p,
\end{equation}
and $M_m : = E^{\dagger}_m E_m$ is called a {\it positive operator valued measure} 
(POVM)~\cite{Davies70}. The POVM has the same role of the spectrum of 
the operator $A_s$ in the case of the projective measurement. To derive the projective measurement 
from the indirect measurement, we set the spectrum of the operator $A_s$ as the POVM, 
that is, $M_m = \ket{a_m}_s \bra{a_m}$. Since the sum of the probability distribution over the measurement outcome equals to one, 
we obtain 
\begin{align}
	\sum_m \Pr [B_p = b_m] = 1 
	& \Longleftrightarrow \sum_m \Tr \ket{\psi}_{s} \bra{\psi} M_m = \Tr \ket{\psi}_{s} \bra{\psi} \sum_m M_m = 1 \notag \\
	& \to \sum_m M_m = \ID . \label{preconsist}
\end{align}
Here, the last line uses the property of the density operator, $\Tr \ket{\psi}_{s} \bra{\psi} = 1$ for any $\ket{\psi}$.

\section{Review of Weak Value} \label{wvrev_sec}
In Secs.~\ref{dqm_sec} and~\ref{iqm_sec}, the direct and indirect quantum measurement schemes, 
we only get the probability distribution. However, the probability distribution is not the only 
thing that is experimentally accessible in quantum mechanics. 
In quantum mechanics, the phase is also an essential ingredient and in particular the geometric phase is a notable 
example of an experimentally accessible quantity~\cite{SW}. The general experimentally accessible quantity which 
contains complete information of the probability and the phase seems to be the {\it weak value} advocated by 
Aharonov and his collaborators ~\cite{AAV,AR}. They proposed a model of weakly coupled system and probe, see 
Sec.~\ref{WMR}, to obtain information to a physical quantity as a ``weak value" only slightly disturbing the state. 
Here, we briefly review the formal aspects of the weak value.

For an observable $A$, the {\it weak value} $\wv{A}{w}$ is defined as 
\begin{equation}
	\wv{A}{w} := \frac{{\bra{f}}U(t_f,t)AU(t,t_i)\ket{i}}{\bra{f}U(t_f,t_i)\ket{i}}\in \C,
	\label{wvdefi}
\end{equation} 
where $\ket{i}$ and $\bra{f}$ are normalized pre-selected ket and 
post-selected bra state vectors, respectively~\cite{AAV}. 
Here, $U(t_2,t_1)$ is an evolution operator from the time $t_1$ to $t_2$. 
The weak value $\wv{A}{w}$ actually depends on the pre- and post-selected states 
$\ket{i}$ and $\bra{f}$ but we omit them for notational simplicity in the case that 
we fix them. Otherwise, we write them explicitly as $_{f}\wv{A}{i}^{w}$ instead for $\wv{A}{w}$. 
The denominator is assumed to be non-vanishing. This quantity is, in general, in the 
complex number $\C$. Historically, the terminology ``weak value" 
comes from the {\it weak measurement}, where the coupling between the target system 
and the probe is weak, explained in the following section. Apart from their original 
concept of the weak value and the weak measurement, we emphasize that the concept of 
the weak value is independent of the weak measurement~\footnote{This concept is shared in 
Refs.~\cite{Johansen04, Johansen07, Mitchison08, Aberg09, Hu90, Parks10, Dressel10, Dressel11_2}.}. To take the weak 
value as {\it a priori} given quantity in quantum mechanics, we will construct 
the observable-independent probability space. In the conventional quantum measurement theory, 
the probability space, more precisely speaking, the probability measure, depends on the 
observable~\cite[Sec. 4.1]{shikano_master}~\footnote{Due to this, the probability in quantum 
mechanics cannot be applied to the standard probability theory. As another approach 
to resolve this, there is the quantum probability theory~\cite{Summers}.}.

Let us calculate the expectation value in quantum mechanics for the quantum state $\ket{\psi}$ as 
\begin{align}
	\Ex [A] = \bra{\psi}A\ket{\psi} 
	& = \int d \phi \, \bkt{\psi}{\phi} \bra{\phi} A \ket{\psi}  
	= \int d \phi \, \bkt{\psi}{\phi} \cdot \bkt{\phi}{\psi} \frac{\bra{\phi} A \ket{\psi}}{\bkt{\phi}{\psi}},
	\notag \\
	& = \int d \phi \, |\bkt{\psi}{\phi}|^2 \,_{\phi}\wv{A}{\psi}^{w}, \label{expwv}
\end{align}
where $h_{A} [\ket{\phi}] =  \,_{\phi}\wv{A}{\psi}^{w}$ is complex random variable and 
$dP := | \bkt{\phi}{\psi} |^2 d \phi$ is the probability measure and is independent of the 
observable $A$. Therefore, the event space $\Omega = \{ \ket{\phi} \}$ is taken as the set of the 
post-selected state. This formula means that the extended probability theory corresponds to the 
Born rule. From the conventional definition of the variance in 
quantum mechanics, we obtain the variance as 
\begin{align}
	\Var [A] & = \int |h_A [\ket{\phi}]|^2 dP - \left( \int h_A [\ket{\phi}] dP \right)^2 \notag \\
	& = \int \left| \frac{\bra{\phi} A \ket{\psi}}{\bkt{\phi}{\psi}} \right|^2 |\bkt{\phi}{\psi}|^2 d \phi 
	- \left( \int \frac{\bra{\phi} A \ket{\psi}}{\bkt{\phi}{\psi}} |\bkt{\phi}{\psi}|^2 d \phi 
	\right)^2 \notag \\
	& = \int \left| \bra{\phi} A \ket{\psi} \right|^2 d \phi - \left( \int \bkt{\psi}{\phi} 
	\bra{\phi} A \ket{\psi} d \phi \right)^2 \notag \\
	& = \int \bra{\psi} A \kb{\phi} A \ket{\psi} d \phi - (\bra{\psi} A \ket{\psi})^2 \notag \\
	& = \bra{\psi} A^2 \ket{\psi} - (\bra{\psi} A \ket{\psi})^2. \label{varwv}
\end{align}
This means that the observable-independent probability space can be characterized by the weak value~\cite{SH}. 
From another viewpoint of the weak value, the statistical average of the weak value 
coincides with the expectation value in quantum mechanics~\cite{AB}. This can be interpreted as the 
probability while this allows the ``negative probability"~\footnote{The concept of negative probability 
is not new, e.g., see Refs.~\cite{Dirac,FEYNMAN, holger1, holger5, HOFMANN}. The weak value defined by Eq. (\ref{wvdefi}) is 
normally called the transition amplitude from the state $\ket{\psi}$ to $\bra{\phi}$ via the 
intermediate state $\ket{a}$ for $A=\kb{a}$, the absolute value squared of which is the probability 
for the process. But the three references quoted above seem to suggest that they might be 
interpreted as probabilities in the case that the process is counter-factual, i.e., the 
case that the intermediate state $\ket{a} $ is not projectively measured. The description 
of intermediate state $\ket{a}$ in the present work is counter-factual or virtual 
in the sense that the intermediate state would not be observed by projective measurements. Feynman's 
example is the counter-factual ``probability" for an electron to have its spin up in the 
$x$-direction and also spin down in the $z$-direction~\cite{FEYNMAN}.}. On this idea, 
the uncertainty relationship was analyzed on the Robertson 
inequality~\cite{Garretson04, Sokolovski07} and on 
the Ozawa inequality~\cite{LW10}, which the uncertainty relationships are 
reviewed in Ref.~\cite[Appendix A]{shikano_master}.
Also, the joint probability for the compound system was analyzed in Refs.~\cite{Berry11, Botero08}.
Furthermore, if the operator $A$ is a projection operator $A=\kb{a}$, the above identity 
becomes an analog of the Bayesian formula,
\begin{equation}
|\bkt{a}{\psi}|^2 = \int \,_{\phi}\wv{\kb{a}}{\psi}^{w} |\bkt{\phi}{\psi}|^2 d \phi.
\label{bay}
\end{equation} 
The left hand side is the probability to obtain the state $\ket{a}$ given the 
initial state $\ket{\psi}$. From this, one may get some intuition by interpreting 
the weak value $\,_{\phi}\wv{\kb{a}}{\psi}^{w}$ as the complex conditional probability 
of obtaining the result $\ket{a}$ under an initial condition $\ket{i}$ and a 
final condition $\ket{f}$ in the process $\ket{i} \rightarrow \ket{a} \rightarrow 
\ket{f}$~\cite{Steinberg1, Steinberg2}~\footnote{The interpretation of the weak value 
as a complex probability is suggested in the literature \cite{MJP}.}.
Of course, we should not take the strange weak values too literally but the remarkable 
consistency of the framework of the weak values due to Eq. (\ref{bay}) and a consequence 
of the completeness relation,
\begin{equation}
\sum_{a}\wv{\ket{a}\bra{a}}{w}=1,
\label{comp}
\end{equation}
may give a useful concept to further push theoretical consideration by intuition.

This interpretation of the weak values gives many possible examples of strange phenomena 
like a negative kinetic energy~\cite{APRV}, a spin $100 \hbar$ for an 
electron~\cite{AAV, Duck89, Golub89, Ashhab09} and a superluminal 
propagation of light~\cite{RY, Sokolovski05} and neutrino~\cite{Tanimura, Berry11_2} motivated by the 
OPERA experiment~\cite{OPERA}. 
The framework of weak values has been theoretically applied to foundations of quantum physics, e.g., 
the derivation of the Born rule from the alternative assumption for {\it a priori} measured value~\cite{Hosoya11}, 
the relationship to the uncertainty relationship~\cite{holger6}, 
the quantum stochastic process~\cite{WANG}, the tunneling traverse 
time~\cite{Steinberg1, Steinberg2, Ranfagni93}, arrival time and time 
operator~\cite{Ruseckas02, Ahnert04, Ruseckas04, Busch90}, the decay law~\cite{Davies09, Urbanowski09}, 
the non-locality~\cite{Tollaksen10, Tollaksen07, Brodutch08}, 
especially, quantum non-locality, 
which is characterized by the modular variable, consistent history~\cite{VAIDMAN, Kastner04}, 
Bohmian quantum mechanics~\cite{Leavens05}, semi-classical weak values on the tunneling~\cite{TANAKA}, 
the quantum trajectory~\cite{WISEMAN}, and classical stochastic theory~\cite{Tomita}. Also, in quantum information science, the weak value was 
analyzed on quantum computation~\cite{Oreshkov05, Brun08}, quantum communications~\cite{BACGS, Botero00}, 
quantum estimation, e.g., state tomography~\cite{holger2, Hofmann11, Hofmann11_2, Shpitalnik08, Massar11} 
and the parameter estimation~\cite{holger3, holger4, ST}, 
the entanglement concentration~\cite{Menzies07}, the quasi-probability distribution~\cite{Bednorz, Sagawa, Tsang, Gosson} and 
the cloning of the unknown quantum state with hint~\cite{Sjoqvist06}. Furthermore, this was applied to the cosmological situations in 
quantum-mechanical region, e.g., the causality~\cite{Anandan02}, the inflation theory~\cite{Campo04}, backaction of 
the Hawking radiation from the black hole~\cite{Englert95, Englert10, Brout95}, and the new interpretation 
of the universe~\cite{AG05, Gruss00, Ellis10}. However, the most important fact is that the weak 
value is experimentally accessible so that the intuitive argument based on the weak values 
can be either verified or falsified by experiments. There are many experimental proposals to obtain the 
weak value in the optical~\cite{Li11, Knight90, Simon11, Agarwal07, Cho10, Menzies09, Wu11} and the 
solid-state~\cite{Romito07, Romito10, Korotkov06, Miller09, Jordan07, Williams08, Jordan10, Zilberberg11} systems. 
Recently, the unified viewpoint was found in the weak measurement~\cite{Kofman}.

On the realized experiments on the weak value, we can classify the three concepts: (i) 
testing the quantum theory, (ii) the amplification of the tiny effect in quantum mechanics, 
and (iii) the quantum phase. 
\begin{itemize}
	\item[(i)] Testing the quantum theory. The weak value can solve many quantum paradoxes 
	seen in the book~\cite{AR}. The Hardy paradox~\cite{HARDY}, which there occurs in 
	two Mach-Zehnder interferometers of the electron and the position, was resolved by 
	the weak value~\cite{hardy_aharonov} and was analyzed deeper~\cite{HS}. This paradoxical situation 
	was experimentally demonstrated in the optical setup~\cite{Lundeen09, YYKI}.
	By the interference by the polarization~\cite{PCS} 
	and shifting the optical axis~\cite{RSH}, the spin beyond the eigenvalue is 
	verified. By the latter technique, the three-box paradox~\cite{AV90, VAIDMAN} 
	was realized~\cite{RLS}. Thereafter, the theoretical progresses are 
	the contextuality on quantum mechanics~\cite{Tollaksen07_3}, the generalized 
	N-box paradox~\cite{Leavens06}, and the relationship 
	to the Kirkpatrick game~\cite{Ravon07}. The weak value is used to show the violation 
	of the Leggett-Garg inequality~\cite{Williams08, Marcovitch11}. This experimental 
	realizations were demonstrated 
	in the system of the superconducting qubit~\cite{Laloy}, the optical systems~\cite{Pryde2, Dressel11}.
	Furthermore, since the weak value for the position observable $\kb{x}$ with the pre-selected 
	state $\ket{\psi}$ and the post-selection $\ket{p}$ is given by  
	\begin{equation}
		\wv{\kb{x}}{w} = \frac{\bkt{p}{x}\bkt{x}{\psi}}{\bkt{p}{\psi}} = \frac{e^{ixp} 
		\psi (x)}{\phi (p)},
	\end{equation}
	we obtain the wavefunction $\psi (x) := \bkt{x}{\psi}$ as the weak value with the multiplication 
	factor $1/\phi (0)$ with $\phi (p) := \bkt{p}{\psi}$ in the case of $p = 0$. Using the photon 
	transverse wavefunction, there are experimentally demonstrated by replacing the weak measurement for the 
	position as the polarization measurement~\cite{Lundeen11}. This paper was theoretically criticized 
	to compare the standard quantum state tomography for the phase space in Ref.~\cite{Happasalo} and was 
	generalized to a conventionally unobservable~\cite{Lundeen11_2}.
	As other examples, there are the detection of the superluminal signal~\cite{Brunner04}, the quantum 
	non-locality~\cite{Spence10}, and the Bohmian trajectory~\cite{Shalm10, Kocsis11} on the base of the 
	theoretical analysis~\cite{Wiseman07}.
	\item[(ii)] Amplification of the tiny effect in quantum mechanics. Since the weak value has the denominator, the weak value is very large 
	when the pre- and post-selected states are almost orthogonal\footnote{Unfortunately, the signal to noise ratio is not drastically changed under the 
	assumption that the probe wavefunction is Gaussian on a one-dimensional parameter space.}. This is practical advantage to use 
	the weak value. While the spin Hall effect of light~\cite{Onoda04} is too tiny effect to 
	observe its shift in the conventional scheme, by almost orthogonal polarizations 
	for the input and output, this effect was experimentally verified~\cite{HK} to be theoretically 
	analyzed from the viewpoint of the spin moments~\cite{Krowne09}. Also, some interferometers were applied. 
	The beam deflection on the Sagnac interferometer~\cite{Dixon09} was shown to be supported by 
	the classical and quantum theoretical analyses~\cite{Howell10}~\footnote{Unfortunately, 
	the experimental data are mismatched to the theoretical prediction. While the authors claimed that this differences results from the stray of light, 
	the full-order calculation even is not mismatched~\cite{Koike_full}. However, this difference remains the open problem.}. Thereafter, 
	optimizing the signal-to-noise ratio~\cite{Turner11, Starling09}, the phase 
	amplification~\cite{Starling10, Starling10_2}, and the precise frequency measurement~\cite{Starling10_3} 
	were demonstrated. As another example, there is shaping the laser pulse beyond the 
	diffraction limit~\cite{Ranfagni04}. According to Steinberg~\cite{Steinberg_private}, in his group, 
	the amplification on the single-photon nonlinearity has been progressed to be based on the theoretical 
	proposal~\cite{Feizpour11}. While the charge sensing amplification was proposed 
	in the solid-state system~\cite{Zilberberg11}, 
	there is no experimental demonstration on the amplification 
	for the solid-state system. Furthermore, the upper bound of the amplification has not yet solved. 
	Practically, this open problem is so important to understand the relationship to the 
	weak measurement regime. 
	\item[(iii)] Quantum phase. The argument of the weak value for the projection 
	operator is the geometric phase as 
	\begin{align}
		\gamma & := \arg \bkt{\psi_1}{\psi_2} \bkt{\psi_2}{\psi_3} \bkt{\psi_3}{\psi_1} \notag \\
			& = \arg \frac{\bkt{\psi_1}{\psi_2} \bkt{\psi_2}{\psi_3} \bkt{\psi_3}{\psi_1}}{|\bkt{\psi_3}{\psi_1}|^2} 
			= \arg \frac{\bkt{\psi_1}{\psi_2} \bkt{\psi_2}{\psi_3}}{\bkt{\psi_1}{\psi_3}} \notag \\
			& = \arg \,_{\psi_1}\wv{\kb{\psi_2}}{\psi_3}^{w}. \label{geowv}
	\end{align}
	where the quantum states, $\ket{\psi_1}, \ket{\psi_2}$, and $\ket{\psi_3}$, are the pure states~\cite{Sjoqvist}. 
	Here, the quantum states, $\ket{\psi_1}$ and $\ket{\psi_3}$, are the post- and pre-selected states, respectively.
	Therefore, we can evaluate the weak value 
	from the phase shift~\cite{Tamate09}. Of course, vice versa~\cite{Brunner10}. Tamate {\it et al.} 
	proposal was demonstrated on the relationship to quantum eraser~\cite{Kobayashi11} and by the 
	a three-pinhole interferometer~\cite{Kobayashi10}. The phase shift from the zero mode to $\pi$ mode 
	was observed by using the interferometer with a Cs vapor~\cite{Camacho09} and the phase shift in 
	the which-way path experiment was demonstrated~\cite{Mir07}. 
	Furthermore, by the photonic crystal, phase singularity was demonstrated~\cite{Solli04}.
	\item[(iv)] Miscellaneous. The backaction of the weak measurement is experimentally realized in the optical system~\cite{Iinuma}. 
	Also, the parameter estimation using the weak value is demonstrated~\cite{holger4}.
\end{itemize}
\section{Historical Background -- Two-State Vector Formalism} \label{wvhis_sec}
In this section, we review the original concept of the two-state vector formalism. 
This theory is seen in the reviewed papers~\cite{AV08, AT}.
\subsection{Time Symmetric Quantum Measurement}
While the fundamental equations of the microscopic physics are time symmetric, for example, the 
Newton equation, the Maxwell equation, and the Schr\"{o}dinger equation~\footnote{It is, of course, noted that 
thermodynamics does not have the time symmetric properties from the second law of thermodynamics.}, 
the quantum measurement is not time symmetric. This is because the quantum state after 
quantum measurement depends on the measurement outcome seen in Sec.~\ref{qo_sec}. 
The fundamental equations of the microscopic 
physics can be solved to give the initial boundary condition. To construct the time 
symmetric quantum measurement, 
the two boundary conditions, which is called pre- and post-selected states, are needed. The concept of the 
pre- and post-selected states is called the two-state vector formalism~\cite{ABL}. In the following, 
we review the original motivation to construct the time symmetric quantum measurement. 

Let us consider the projective measurement for the observable $A = \sum_i a_i \kb{a_i}$ 
with the initial boundary condition denoted as $\ket{i}$ at time $t_i$. 
To take quantum measurement at time $t_0$, the probability to obtain the measurement outcome $a_j$ is given by 
\begin{equation}
	\Pr [A = a_j] = \parallel \bra{a_j} U \ket{i} \parallel^2, 
\end{equation}
with the time evolution $U := U(t_0, t_i)$. After the projective measurement, the quantum state 
becomes $\ket{a_j}$. 
Thereafter, the quantum state at $t_f$ is given by $\ket{\varphi_j} := V \ket{a_j}$ with 
$V = U(t_f, t_0)$. the probability 
to obtain the measurement outcome $a_j$ can be rewritten as 
\begin{equation}
	\Pr [A = a_j] = \frac{\parallel \bra{\varphi_j} V \ket{a_j} \parallel^2 \parallel 
	\bra{a_j} U \ket{i} \parallel^2}{\sum_j \parallel \bra{\varphi_j} V \ket{a_j} 
	\parallel^2 \parallel \bra{a_j} U \ket{i} \parallel^2}. \label{projpro}
\end{equation}
It is noted that $\parallel \bra{\varphi_j} V \ket{a_j} \parallel^2 = 1$. 
Here, we consider the backward time evolution from the quantum state $\ket{\varphi_j}$ at time $t_f$. 
We always obtain the quantum state $\ket{a_j}$ after the projective measurement at time $t_0$. 
Therefore, the quantum state at time $t_i$ is given by 
\begin{equation}
	\ket{{\tilde i}} := U^{\dagger} \kb{a_j} V^{\dagger} \ket{\varphi_j} = U^{\dagger} \ket{a_j}.
\end{equation}
In general, $\ket{{\tilde i}}$ is different from $\ket{i}$. Therefore, projective measurement is time 
asymmetric. 

To construct the time-symmetric quantum measurement, we add the boundary condition at time $t_f$. 
Substituting the quantum state $\ket{\varphi_j}$ to the {\it specific} one denoted as $\ket{f}$, 
which is called the post-selected state, the probability to obtain the measurement outcome 
$a_j$, Eq.~(\ref{projpro}), becomes 
\begin{equation}
	\Pr [A = a_j] = \frac{\parallel \bra{f} V \ket{a_j} \parallel^2 \parallel 
	\bra{a_j} U \ket{i} \parallel^2}{\sum_j \parallel \bra{f} V \ket{a_j} 
	\parallel^2 \parallel \bra{a_j} U \ket{i} \parallel^2}. \label{ablrule}
\end{equation}
This is called the Aharonov-Bergmann-Lebowitz (ABL) formula~\cite{ABL}. 
From the analogous discussion to the above, this measurement is time symmetric.
Therefore, describing quantum mechanics by the pre- and post-selected states,  $\ket{i}$ and $\bra{f}$, is called the ``two-state vector formalism".
\subsection{Protective Measurement}
In this subsection, we will see the noninvasive quantum measurement for the specific 
quantum state on the target system. 
Consider a system of consisting of a target and a probe defined in the Hilbert space $\HH_s\otimes\HH_p$. 
The interaction between the target and the probe is given by 
\begin{equation}
	H_{int} (t) = g(t) (A \otimes \hat{P}),
\end{equation}
where  
\begin{equation}
	\int^{T}_{0} g(t) dt =: g_0.
\end{equation}
The total Hamiltonian is given by 
\begin{equation}
	H_{tot} (t) = H_s (t) + H_p (t) + H_{int} (t).
\end{equation}
Here, we suppose that $H_s (t)$ has discrete and non-degenerate eigenvalues denoted as $E_i (t)$. 
Its corresponding eigenstate is denoted as $\ket{E_i (t)}$ for any time $t$. Furthermore, we 
consider the discretized time from the time interval $[0, T]$; 
\begin{equation}
	t_n = \frac{n}{N} T \ (n = 0, 1, 2, \dots, N),
\end{equation}
where $N$ is a sufficiently large number. We assume that the initial target state is the energy eigenvalue $\ket{E_i (t)}$~\footnote{Due to 
this assumption, it is impossible to apply this to the arbitrary 
quantum state. Furthermore, while we seemingly need the projective measurement, that is, 
destructive measurement, for the target system to confirm whether the initial quantum state 
is in the eigenstates~\cite{Unruh_com, Rovelli_com}, they did not apply this to the arbitrary 
state. For example, if the system is cooled down, we can pickup the ground state of the target 
Hamiltonian $H_s (0)$.} the initial probe state 
is denoted as $\ket{\xi (0)}$. Under the adiabatic condition, the compound state 
for the target and probe systems at time $T$ is given by 
\begin{align}
	\ket{\Phi (T)} & := \kb{E_i (t_N)} e^{- i \frac{T}{N} H_{tot} (t_N)} \kb{E_i (t_{N-1})} e^{- i \frac{T}{N} H_{tot} (t_{N-1})} \cdots \notag \\
	& \ \ \ \ \ \times \kb{E_i (t_2)} e^{- i \frac{T}{N} H_{tot} (t_2)} \kb{E_i (t_1)} e^{- i \frac{T}{N} H_{tot} (t_1)} \ket{E_i (0)} \otimes 
	\ket{\xi (0)}.
\end{align}
Applying the Trotter-Suzuki theorem~\cite{TS1,TS2}, one has 
\begin{align}
	\ket{\Phi (T)} & := \kb{E_i (t_N)} e^{- i \frac{T}{N} H_{int} (t_N)} 
	\kbt{E_i (t_{N})}{E_i (t_{N-1})} e^{- i \frac{T}{N} H_{int} 
	(t_{N-1})} \cdots \notag \\
	& \ \ \ \ \ \times \kbt{E_i (t_3)}{E_i (t_2)} e^{- i \frac{T}{N} H_{int} (t_2)} \kbt{E_i (t_2)}{E_i (t_1)} 
	e^{- i \frac{T}{N} H_{int} (t_1)} \ket{E_i (1)} \otimes \ket{\xi (T)}.
\end{align}
By the Taylor expansion with the respect to $N$, the expectation value is 
\begin{align}
	\bra{E_i (t_n)} e^{-i \frac{T}{N} g(t_n) A \otimes \hat{P}} \ket{E_i (t_n)} & = 
	1 - i \frac{T}{N} g (t_n) \Ex [A(t_n)] \hat{P} - \frac{1}{2} \frac{T^2}{N^2} g^2 (t_n) 
	( \Ex [A(t_n)] )^2 \hat{P}^2 \notag \\
	& \ \ \ \ \ \ \ \ \ \ \ \ \ \ \ \ \ \ \ \ \ \ \ \ \ 
	- \frac{1}{2} \frac{T^2}{N^2} g^2 (t_n) \Var [A(t_n)] \hat{P}^2 + O \left( \frac{1}{N^3} \right) \notag \\
	& \sim e^{- i \frac{T}{N} g (t_n) \Ex [A(t_n)] \hat{P}} \left( 1 - \frac{1}{2} \frac{T^2}{N^2} 
	g^2 (t_n) \Var [A(t_n)] \hat{P}^2 \right).
\end{align}
In the limit of $N \to \infty$, by quadrature by parts, we obtain 
\begin{align}
	\ket{\Phi (T)} & \sim \ket{E_i (T)} \exp \left[ - i \left( \int^{T}_{0} g(t) \Ex [A(t)] 
	dt \right) \hat{P} \right] \notag \\ 
	& \ \ \ \ \ \ \ \ \ \ \ \ \ \ \ \ \ \ \ \ \times \left[ 1 
	- \frac{T}{N}  \left(\int^{T}_{0} g^2 (t) \Var [A(t)] dt \right) \hat{P}^2 \right] \ket{\xi (T)} 
	+ O \left( \frac{1}{N} \right) \notag \\
	& = \ket{E_i (T)} \exp \left[ - i \left( \int^{T}_{0} g(t) \Ex [A(t)] dt \right) \hat{P} \right] 
	\ket{\xi (T)}.
\end{align}
Therefore, the shift of the expectation value for the position operator on the probe system is given by 
\begin{equation}
	\Delta [Q] = \int^{T}_{0} g(t) \Ex [A(t)] dt.
\end{equation}
It is emphasized that the quantum state on the target system remains to be the energy eigenstate of $H_s$. 
Therefore, this is called the {\it protective measurement}~\cite{AV93, AAV93}. It is remarked 
that the generalized version of 
the protective measurement in Ref.~\cite{AV95} by the pre- and post-selected states and 
in Ref.~\cite{AMTPV} by the meta-stable state. 
\subsection{Weak Measurement} \label{WMR}
From the above discussions, is it possible to combine the above two concepts, i.e., the 
time-symmetric quantum measurement 
without destroying the quantum state~\cite{Vaidman96}? 
This answer is the {\it weak measurement}~\cite{AAV}. 
Consider a target system and a probe defined in the Hilbert space $\HH_s \otimes \HH_p$. 
The interaction of the target system and the probe is assumed to be weak and instantaneous,
\begin{equation} \label{wmint}
	H_{int} (t) = g (A \otimes \hat{P}) \df (t-t_0),
\end{equation}
where an observable $A$ is defined in $\HH_s$, while $\hat{P}$ is the momentum operator of the probe. 
The time evolution operator becomes $e^{-ig(A\otimes \hat{P})}$. 
Suppose the probe initial state is $\ket{\xi}$.
For the transition from the pre-selected state $\ket{i}$ to the post-selected state $\ket{f}$, 
the probe wave function becomes $\ket{\xi^{\prime}} = 
\bra{f}Ve^{-ig(A\otimes \hat{P})}U\ket{i} \ket{\xi}$, 
which is in the weak coupling case,
\begin{align}
\ket{\xi^{\prime}}&=\bra{f}V e^{-ig(A \otimes \hat{P})} U \ket{i} \ket{\xi} \notag \\ 
&= \bra{f}V [ \ID -ig(A \otimes \hat{P}) ] U \ket{i} \ket{\xi} + O (g^2) \notag \\
&= \bra{f}V U \ket{i} - ig \bra{f}V A U \ket{i} \otimes \hat{P} \ket{\xi} + O (g^2) \notag \\
&= \bra{f}V U \ket{i} \left( 1 - ig \wv{A}{w} \hat{P} \right) \ket{\xi} + O (g^2) \label{wmi}
\end{align}
where $\bra{f}VAU\ket{i}/\bra{f}VU\ket{i}=\wv{A}{w}$.
Here, the last equation uses the approximation that $g \wv{A}{w} \ll 1$~\footnote{It is remarked 
that Wu and Li showed the second-order correction of the weak measurement~\cite{Wu2}. A further analysis was 
shown in Refs.~\cite{Pan, Parks11}.}. 
We obtain the shifts of the expectation values for the position and momentum operators on the probe as 
the following theorem: 
\begin{thm}[Jozsa~\cite{JOZSA}] \label{Jozsaproof}
We obtain the shifts of the expectation values for the position and momentum operators on the probe 
after the weak measurement with the post-selection as 
\begin{align}
	\Delta [\hat{Q}] & = g \Re \wv{A}{w} + m g \Im \wv{A}{w}  
	\left. \frac{d \Var [\hat{Q}]}{dt} \right|_{t=t_0}, \label{qshift} \\
	\Delta [\hat{P}] & = 2 g \Im \wv{A}{w} \Var [\hat{P}], \label{pshift}
\end{align}
where 
\begin{align}
\Delta [\hat{Q}] & := \frac{\bra{\xi^\prime} \hat{Q} \ket{\xi^\prime}}{\bk{\xi^\prime}} 
- \bra{\xi} \hat{Q} \ket{\xi}, \\ 
\Delta [\hat{P}] &:= \frac{\bra{\xi^\prime} \hat{P} \ket{\xi^\prime}}{\bk{\xi^\prime}} - \bra{\xi} 
\hat{P} \ket{\xi}, \\ 
\Var [\hat{Q}] &:= \bra{\xi} \hat{Q^2} \ket{\xi} - (\bra{\xi} \hat{Q} \ket{\xi})^2, \\
\Var [\hat{P}] &:= \bra{\xi} \hat{P^2} \ket{\xi} - (\bra{\xi} \hat{P} \ket{\xi})^2.
\end{align}
Here, the probe Hamiltonian is assumed as 
\begin{equation}
	\hat{H} = \frac{\hat{P}^2}{2 m} + V (Q), \label{pham} 
\end{equation}
where $V(Q)$ is the potential on the coordinate space. 
\end{thm}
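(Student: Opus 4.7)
The plan is to work directly from the post-interaction probe state given in Eq.~(\ref{wmi}). Since the scalar prefactor $\bra{f}VU\ket{i}$ cancels between numerator and denominator in $\bra{\xi'}\cdot\ket{\xi'}/\bk{\xi'}$, I can effectively take $\ket{\xi'}=(\ID - ig\wv{A}{w}\hat{P})\ket{\xi}$ and compute everything to first order in $g$. Writing $\wv{A}{w}=\Re\wv{A}{w}+i\Im\wv{A}{w}$ and expanding
\begin{equation}
\bk{\xi'} = 1 + ig(\wv{A}{w}^{\ast}-\wv{A}{w})\bra{\xi}\hat{P}\ket{\xi} + O(g^2) = 1 + 2g\Im\wv{A}{w}\,\bra{\xi}\hat{P}\ket{\xi} + O(g^2),
\end{equation}
fixes the normalization correction.

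Next I would compute the two matrix elements separately. For the momentum shift, $\hat{P}$ commutes with the perturbation, so the cross terms combine into $2g\Im\wv{A}{w}\,\bra{\xi}\hat{P}^2\ket{\xi}$; dividing by $\bk{\xi'}$ and expanding yields exactly $\Delta[\hat{P}]=2g\Im\wv{A}{w}\,\Var[\hat{P}]$. For the position shift, the cross terms give $-ig\wv{A}{w}\bra{\xi}\hat{Q}\hat{P}\ket{\xi}+ig\wv{A}{w}^{\ast}\bra{\xi}\hat{P}\hat{Q}\ket{\xi}$. Here the canonical commutator $[\hat{Q},\hat{P}]=i$ is the key: splitting $\hat{Q}\hat{P}$ and $\hat{P}\hat{Q}$ into symmetric and antisymmetric parts, the antisymmetric part contributes $g\Re\wv{A}{w}$ while the symmetric part contributes $g\Im\wv{A}{w}\,\bra{\xi}\{\hat{Q},\hat{P}\}\ket{\xi}$. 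After dividing by $\bk{\xi'}$ and keeping $O(g)$ terms, the $\Im\wv{A}{w}$ coefficient becomes $\bra{\xi}\{\hat{Q},\hat{P}\}\ket{\xi} - 2\bra{\xi}\hat{Q}\ket{\xi}\bra{\xi}\hat{P}\ket{\xi}$.

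The final step — and the only one requiring the probe Hamiltonian (\ref{pham}) — is to recognize this bilinear as $m\,d\Var[\hat{Q}]/dt$. Using Heisenberg's equation with $\hat{H}=\hat{P}^2/(2m)+V(\hat{Q})$, one has $d\hat{Q}/dt = \hat{P}/m$, hence
\begin{equation}
\frac{d}{dt}\bra{\xi}\hat{Q}^2\ket{\xi} = \frac{1}{m}\bra{\xi}\{\hat{Q},\hat{P}\}\ket{\xi}, \qquad \frac{d}{dt}\bra{\xi}\hat{Q}\ket{\xi}^2 = \frac{2}{m}\bra{\xi}\hat{Q}\ket{\xi}\bra{\xi}\hat{P}\ket{\xi},
\end{equation}
so $m\,d\Var[\hat{Q}]/dt = \bra{\xi}\{\hat{Q},\hat{P}\}\ket{\xi}-2\bra{\xi}\hat{Q}\ket{\xi}\bra{\xi}\hat{P}\ket{\xi}$ evaluated at $t=t_0$. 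Substituting yields Eq.~(\ref{qshift}).

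I expect the bookkeeping with complex conjugates of $\wv{A}{w}$ to be the most error-prone part, but the only conceptual obstacle is the last identification: the combination $\bra{\xi}\{\hat{Q},\hat{P}\}\ket{\xi}-2\bra{\xi}\hat{Q}\ket{\xi}\bra{\xi}\hat{P}\ket{\xi}$ looks like a bare matrix element until one invokes Heisenberg's equation for the probe's free evolution to reinterpret it as the instantaneous rate of spreading of the probe wave packet. The $V(\hat{Q})$ term drops out because $[V(\hat{Q}),\hat{Q}]=0$, which is why the result depends only on $m$ and not on the probe potential.
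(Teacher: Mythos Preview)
Your proposal is correct and follows essentially the same route as the paper: expand $\bra{\xi'}\hat{M}\ket{\xi'}/\bk{\xi'}$ to first order in $g$, separate the commutator and anticommutator contributions, and then invoke the Heisenberg equation for the probe Hamiltonian (\ref{pham}) to rewrite $\bra{\xi}\{\hat{Q},\hat{P}\}\ket{\xi}-2\bra{\xi}\hat{Q}\ket{\xi}\bra{\xi}\hat{P}\ket{\xi}$ as $m\,d\Var[\hat{Q}]/dt|_{t=t_0}$. The only cosmetic difference is that the paper first derives the shift for a generic probe observable $\hat{M}$ and then specializes to $\hat{M}=\hat{P},\hat{Q}$, whereas you treat the two cases directly.
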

\begin{proof}
For the probe observable $\hat{M}$, we obtain 
\begin{align}
	\frac{\bra{\xi^\prime} \hat{M} \ket{\xi^\prime}}{\bk{\xi^{\prime}}} & = \frac{\bra{\xi} \hat{M} \ket{\xi} 
	- i g \wv{A}{w} \bra{\xi} \hat{M} \hat{P} \ket{\xi} + i g \overline{\wv{A}{w}} 
	\bra{\xi} \hat{P} \hat{M} \ket{\xi}}{\bk{\xi} 
	- i g \wv{A}{w} \bra{\xi} \hat{P} \ket{\xi} + i g \overline{\wv{A}{w}} 
	\bra{\xi} \hat{P} \ket{\xi}} \notag \\
	& = \frac{\bra{\xi} \hat{M} \ket{\xi} + i g \Re \wv{A}{w} \bra{\xi}[ \hat{P}, 
	\hat{M}]\ket{\xi} + g \Im \wv{A}{w} \bra{\xi} \{ \hat{P}, \hat{M} \} 
	\ket{\xi}}{\bk{\xi} + 2 g \Im \wv{A}{w} \bra{\xi} \hat{P} \ket{\xi}} \notag \\
	& = \left( \bra{\xi} \hat{M} \ket{\xi} + i g \Re \wv{A}{w} \bra{\xi}[ \hat{P}, 
	\hat{M}]\ket{\xi} + g \Im \wv{A}{w} \bra{\xi} \{ \hat{P}, \hat{M} \} 
	\ket{\xi} \right) \notag \\ & \ \ \ \ \ \ \ \ \ \ \ \ \ \ \ \ \ \ \ \ \ \ \ \ \ \ \ \ \ \ \ \ \
	 \ \ \ \ \ \ \ \ \ \ \ 
	\times \left( 1 - 2 g \Im \wv{A}{w} \bra{\xi} \hat{P} \ket{\xi} \right) + O(g^2) \notag \\
	& = \bra{\xi} \hat{M} \ket{\xi} + i g \Re \wv{A}{w} \bra{\xi}[ \hat{P}, 
	\hat{M}]\ket{\xi} \notag \\ & \ \ \ \ \ \ \ \ + g \Im \wv{A}{w} \left( \bra{\xi} \{ \hat{P}, \hat{M} \} 
	\ket{\xi} - 2 \bra{\xi} \hat{M} \ket{\xi} \bra{\xi} \hat{P} \ket{\xi} \right) + O(g^2).
\end{align}
If we set $\hat{M} = \hat{P}$, one has 
\begin{equation}
	\Delta [\hat{P}] = 2 g \Im \wv{A}{w} \Var [\hat{P}].
\end{equation}
If instead we set $\hat{M} = \hat{Q}$, one has 
\begin{equation}
	\Delta [\hat{Q}] = g \Re \wv{A}{w} + g \Im \wv{A}{w} \left( \bra{\xi} \{ \hat{P}, \hat{Q} \} 
	\ket{\xi} - 2 g \bra{\xi} \hat{Q} \ket{\xi} \bra{\xi} \hat{P} \ket{\xi} \right) \label{q21}
\end{equation}
since $[\hat{P},\hat{Q}] = -i$. From the Heisenberg equation with the probe Hamiltonian (\ref{pham}), 
we obtain the Ehrenfest theorem;
\begin{align}
	i \frac{d}{dt} \bra{\xi} \hat{Q} \ket{\xi} & = \bra{\xi} [ \hat{Q}, \hat{H}] \ket{\xi} 
	= i \frac{\bra{\xi} \hat{P} \ket{\xi}}{m} \\
	i \frac{d}{dt} \bra{\xi} \hat{Q}^2 \ket{\xi} & = \bra{\xi} [ \hat{Q}^2, \hat{H}] \ket{\xi}
	= i \frac{\bra{\xi} \{ \hat{P}, \hat{Q} \} \ket{\xi}}{m}.
\end{align}
Substituting them into Eq.~(\ref{q21}), we derive
\begin{equation}
	\Delta [\hat{Q}] = g \Re \wv{A}{w} + m g \Im \wv{A}{w}  
	\left. \frac{d \Var [\hat{Q}]}{dt} \right|_{t=t_0}
\end{equation}
since the interaction to the target system is taken at time $t=t_0$.
\end{proof}
Putting together, we can measure the weak value $\wv{A}{w}$ by observing the 
shift of the expectation value of the probe both in the coordinate and momentum representations. 
The shift of the probe position contains the future information up to the post-selected state.
\begin{col}
	When the probe wavefunction is real-valued in the coordinate representation, Eq.~(\ref{qshift}) can be reduced to 
	\begin{equation}
		\Delta [\hat{Q}] = g \Re \wv{A}{w}.
	\end{equation}
\end{col}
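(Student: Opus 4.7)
The plan is to obtain the corollary as a one-line consequence of the theorem's formula by showing that the imaginary-part coefficient in (\ref{qshift}) vanishes whenever the probe wavefunction $\xi(q):=\bkt{q}{\xi}$ is real. Concretely, it suffices to prove
\[
\left.\frac{d\Var[\hat{Q}]}{dt}\right|_{t=t_0}=0
\]
for such $\xi$. From the two Ehrenfest identities derived inside the proof of Theorem~\ref{Jozsaproof}, this derivative equals $\frac{1}{m}\bigl(\bra{\xi}\{\hat{P},\hat{Q}\}\ket{\xi} - 2\bra{\xi}\hat{Q}\ket{\xi}\bra{\xi}\hat{P}\ket{\xi}\bigr)$, so I only need to verify that both $\bra{\xi}\hat{P}\ket{\xi}$ and $\bra{\xi}\{\hat{P},\hat{Q}\}\ket{\xi}$ vanish under the reality assumption.

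For the first, I would write $\bra{\xi}\hat{P}\ket{\xi}=-i\int\xi(q)\,\xi'(q)\,dq=-\tfrac{i}{2}\int (\xi^2)'(q)\,dq$, which is zero by the vanishing of the boundary terms at $\pm\infty$ for any normalizable real $\xi$. This alone kills the second summand in the bracket above. For the symmetric expectation, I would compute $\bra{\xi}\hat{P}\hat{Q}\ket{\xi}$ and $\bra{\xi}\hat{Q}\hat{P}\ket{\xi}$ separately by integration by parts; each is purely imaginary for real $\xi$, and they are complex conjugates of each other, so their sum is zero. Combining these two facts gives $\left.d\Var[\hat{Q}]/dt\right|_{t=t_0}=0$, and substituting into (\ref{qshift}) yields exactly $\Delta[\hat{Q}]=g\Re\wv{A}{w}$.

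I do not anticipate any real obstacle: the argument is a short integration-by-parts calculation, and the content of the corollary is the physically transparent statement that a real-valued wavefunction has zero probability current and zero instantaneous spreading rate. The only caveat is the standard assumption that boundary contributions at spatial infinity vanish, a mild regularity condition already implicit in treating $\ket{\xi}$ as a normalizable probe state in the theorem.
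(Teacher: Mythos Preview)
Your argument is correct and, in fact, cleaner than the paper's. The paper does not work directly with the Ehrenfest identities already established inside the proof of Theorem~\ref{Jozsaproof}; instead it writes the time-dependent Schr\"odinger equation in coordinate representation, performs the polar decomposition $\xi(Q)=R(Q)e^{iS(Q)}$, and extracts the continuity-type equation $\partial_t R + \partial_Q\bigl(R\,\partial_Q S/m\bigr)=0$. Reality of $\xi$ forces $\partial_Q S=0$, hence $\partial_t R=0$, and from this the paper concludes $d\Var[\hat{Q}]/dt=0$. Your route bypasses the dynamics entirely: you feed the reality hypothesis straight into the expectation values $\bra{\xi}\hat{P}\ket{\xi}$ and $\bra{\xi}\{\hat{P},\hat{Q}\}\ket{\xi}$ that appear in the Ehrenfest formulae, kill both by elementary integration by parts, and are done. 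What the paper's approach buys is a slightly more physical picture (vanishing probability current implies no spreading), at the cost of invoking the Schr\"odinger equation a second time; what your approach buys is economy and a sharper dependence on exactly the hypothesis stated, since you never need the probe Hamiltonian~(\ref{pham}) again once the Ehrenfest relations are in hand.
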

\begin{proof}
	From the Schr\"{o}dinger equation in the coordinate representation; 
	\begin{equation}
		i \frac{\partial}{\partial t} \xi (Q) = \frac{1}{2 m} \frac{\partial^2}{\partial Q^2} \xi (Q) + V (Q) \xi (Q),
	\end{equation}
	where $\xi(Q) \equiv \bkt{Q}{\xi}$, putting $\xi (Q) = R(Q) e^{i S(Q)}$, we obtain the equation for the real part as 
	\begin{equation}
		\frac{\partial}{\partial t} R(Q) + \frac{\partial}{\partial Q} \left( \frac{R(Q) \frac{\partial}{\partial Q} 
		S(Q)}{m} \right) = 0.
	\end{equation}
	Therefore, if the probe wavefunction is real-valued in the coordinate representation, 
	one has $\frac{\partial}{\partial Q} S(Q) = 0$ to obtain $\frac{\partial}{\partial t} R = 0$. Therefore, we obtain 
	\begin{equation}
		\frac{d \Var [\hat{Q}]}{dt} = 0
	\end{equation}
	for any time $t$. Vice versa. From this statement, we obtain the desired result from Eq.~(\ref{qshift}).
\end{proof}
It is noted that there are many analyses on the weak measurement, e.g., on the phase 
space~\cite{Lobo09}, on the finite sample~\cite{Tollaksen07_2}, on the counting statics~\cite{Berry10, Lorenzo11},  
on the non-local observable~\cite{Brodutch08, Brodutch09}, and on the complementary observable~\cite{Wu11}.

Summing up this section, the two-state vector formalism is called if the pre- and post-selected states are prepared and 
the weak or strong measurement is taken in the von-Neumann type Hamiltonian, $H = g A \hat{P} \delta (t - t_0)$ 
between the pre- and post-selected states. In the case of the strong measurement, 
we obtain the expectation value $\Ex (A)$ in the probe. On the other hand, in the case of the weak measurement, 
we obtain the weak value $\wv{A}{w}$ in the probe.
\section{Weak-Value Measurement for a Qubit System}
In this subsection, we consider the weak measurement 
in the case that the probe system is a qubit system~\cite{Wu}. 
In general, the interaction Hamiltonian is given by 
\begin{equation}
	H_{int} = g [ A \otimes (\vec{v} \cdot \vec{\sigma})] \df (t-t_0),
\end{equation}
where $\vec{v}$ is a unit vector. Expanding the interaction Hamiltonian for the pre- and post-selected 
states, $\ket{\psi}$ and $\ket{\phi}$, respectively up to the first order for $g$, we obtain the shift of 
the expectation value for $\vec{q} \cdot \vec{\sigma}$ as 
\begin{align}
	\Delta [\vec{q} \cdot \vec{\sigma}] & = \frac{\bra{\xi^{\prime}} [\vec{q} \cdot \vec{\sigma}] 
	\ket{\xi^{\prime}}}{\bk{\xi^{\prime}}} - \bra{\xi} [\vec{q} \cdot \vec{\sigma}] 
	\ket{\xi} \notag \\
	& = g \bra{\xi} i [\vec{v} \cdot \vec{\sigma}, \vec{q} \cdot \vec{\sigma}] 
	\ket{\xi} \Re \wv{A}{w} \notag \\
	& \ \ \ + g \left( \bra{\xi} \left\{ \vec{v} \cdot \vec{\sigma}, \vec{q} \cdot \vec{\sigma} \right\} 
	\ket{\xi} - 2 \bra{\xi} \vec{v} \cdot \vec{\sigma} \kb{\xi} \vec{q} \cdot \vec{\sigma} \ket{\xi} 
	\right) \Im \wv{A}{w} + O(g^2) \notag \\
	& = 2 g \{ (\vec{q} \times \vec{v}) \cdot \vec{m} \} \Re \wv{A}{w} + 2 g \{ \vec{v} \cdot \vec{q} 
	- (\vec{v} \cdot \vec{m})(\vec{q} \cdot \vec{m}) \} \Im \wv{A}{w} + O(g^2),
\end{align}
where 
\begin{align}
	\ket{\xi^{\prime}} & = \bra{\phi} e^{- i g [ A \otimes (\vec{v} \cdot \vec{\sigma})]} \ket{\psi} 
	\ket{\xi}, \\
	\kb{\xi} & =: \frac{1}{2}(\ID + \vec{m} \cdot \vec{\sigma}).
\end{align}
Furthermore, the pre- and post-selected states are assumed to be 
\begin{equation}
	\kb{\psi} =: \frac{1}{2}(\ID + \vec{r}_i \cdot \vec{\sigma}), \ \ \ 
	\kb{\phi} =: \frac{1}{2}(\ID + \vec{r}_f \cdot \vec{\sigma}).
\end{equation}
Since the weak value of the observable $\vec{n} \cdot \vec{\sigma}$ is 
\begin{equation}
	\wv{\vec{n} \cdot \vec{\sigma}}{w} = \frac{\bra{\phi} \vec{n} \cdot \vec{\sigma} 
	\ket{\psi} \bkt{\psi}{\phi}}{|\bkt{\phi}{\psi}|^2} 
	= \vec{n} \cdot \frac{\vec{r}_i + \vec{r}_f + i (\vec{r}_i \times \vec{r}_f)}{1 + \vec{r}_i \cdot 
	\vec{r}_f},
\end{equation}
we obtain 
\begin{equation}
	\Delta [\vec{q} \cdot \vec{\sigma}] = 2 g \{ (\vec{q} \times \vec{v}) \cdot \vec{m} \} 
	\frac{\vec{n} \cdot (\vec{r}_i + \vec{r}_f)}{1 + \vec{r}_i \cdot 
	\vec{r}_f} + 2 g \{ \vec{v} \cdot \vec{q} 
	- (\vec{v} \cdot \vec{m})(\vec{q} \cdot \vec{m}) \} \frac{\vec{n} \cdot (\vec{r}_i 
	\times \vec{r}_f)}{1 + \vec{r}_i \cdot \vec{r}_f}+ O(g^2). \label{qubitwv}
\end{equation}
From Eq.~(\ref{qubitwv}), we can evaluate the real and imaginary parts of the weak value changing the 
parameter of the measurement direction $\vec{q}$. This calculation is used 
in the context of the Hamiltonian estimation~\cite{ST}.

Next, as mentioned before, we emphasize that the weak measurement is only one of the methods to 
obtain the weak value. There are many other approaches to obtain the weak value, e.g., 
on changing the probe state~\cite{Geszti10, Lorenzo08, Johansen04_2, Nakamura10}, and 
on the entangled probe state~\cite{Menzies08}. Here, we show another method 
to obtain the weak value in the case 
that the target and the probe systems are both qubit systems~\cite{Pryde}. 

Let $\ket{\psi}_s := \alpha \ket{0}_s + \beta \ket{1}_s$ be the pre-selected state for the target system. The 
initial probe state can described as $\ket{\xi}_p := \gamma \ket{0}_p + \eta \ket{1}_p$. It is emphasized that 
the initial probe state is controllable. Here, the initial states 
are normalized, that is, $|\alpha|^2 + |\beta|^2 = 1$ and $|\gamma|^2 + |\eta|^2 = 1$. Applying 
the Controlled-NOT (C-NOT) gate, we make a transform of the quantum state for the compound system to  
\begin{equation}
	\ket{\psi}_s \otimes \ket{\xi}_p \xlongrightarrow[]{{\rm C-NOT}} 
	\ket{\Psi_c} := (\alpha \gamma \ket{0}_s + \beta \eta \ket{1}_s) \ket{0}_p + 
	(\alpha \eta \ket{0}_s + \beta \gamma \ket{1}_s) \ket{1}_p.
\end{equation}
In the case of $\gamma \sim 1$, we obtain the compound state as 
\begin{equation}
	\alpha \ket{0}_s \ket{0}_p + \beta \ket{1}_s \ket{1}_p,
\end{equation}
and similarly, in the case of $\eta \sim 1$, one has  
\begin{equation}
	\alpha \ket{0}_s \ket{1}_p + \beta \ket{1}_s \ket{0}_p.
\end{equation}
Those cases can be taken as the standard von Neumann projective measurement. 
For the post-selected state $\ket{\phi}$, the probability to obtain the measurement outcome $k$ on 
the probe is 
\begin{align}
	\Pr [k] & := \frac{\parallel \left( \,_s \bra{\phi} \otimes \,_p \bra{k} \right) \ket{\Psi_c} 
	\parallel^2}{\sum_{m \in \{ 0, 1 \} } 
	\parallel \left( \,_s \bra{\phi} \otimes \,_p \bra{m} \right) \ket{\Psi_c} 
	\parallel^2} \notag \\
	& = \frac{\left| \left( \,_s \bkt{\phi}{0}_s \bkt{0}{\psi}_s \gamma
	+ \,_s \bkt{\phi}{1}_s \bkt{1}{\psi}_s \eta \right) \delta_{k,0} + 
	\left( \,_s \bkt{\phi}{0}_s \bkt{0}{\psi}_s \eta 
	+ \,_s \bkt{\phi}{1}_s \bkt{1}{\psi}_s \gamma \right) \delta_{k,1} 
	\right|^2}{\sum_{m \in \{ 0, 1 \} } 
	\parallel \left( \,_s \bra
	{\phi} \otimes \,_p \bra{m} \right) \ket{\Psi_c} 
	\parallel^2} \notag \\
	& = \frac{|(\gamma - \eta) \,_s\bkt{\phi}{k}_s\bkt{k}{\psi}_s 
	+ \eta \,_s\bkt{\phi}{\psi}_s|^2}{|(\gamma - \eta) \,_s\bkt{\phi}{0}_s\bkt{0}{\psi}_s 
	+ \eta \,_s\bkt{\phi}{\psi}_s|^2 
	+ |(\gamma - \eta) \,_s\bkt{\phi}{1}_s\bkt{1}{\psi}_s + \eta \,_s\bkt{\phi}{\psi}_s|^2} \notag \\
	& = \frac{|(\gamma - \eta) \,_{\phi}\wv{\ket{k}_s\bra{k}}{\psi}^{w} 
	+ \eta|^2}{1 - (\gamma - \eta)^2 (1 - \sum_{m \in \{ 0,1 \}} 
	|\,_{\phi}\wv{\ket{m}_s\bra{m}}{\psi}^{w}|^2)}.
	\label{propara}
\end{align}
Here, in the last line, the parameters $\gamma$ and $\eta$ are assumed to be real. 
Without the post-selection, the POVM to obtain the measurement outcome $k$ is 
\begin{equation}
	E_k = (\gamma^2 - \eta^2) \ket{k}_s\bra{k} + \eta^2.
\end{equation}
Here, the coefficient of the first term means that the strength of measurement and 
the second term is always added. Therefore, we define the quantity to distinguish 
the probability for the measurement outcome $k$ as 
\begin{equation}
	R [k] := \frac{\Pr [k] - \eta^2}{(\gamma^2 - \eta^2)}. \label{readoutt}
\end{equation}
Putting together Eqs.~(\ref{propara}) and (\ref{readoutt}), we obtain 
\begin{equation}
	R [k] = \frac{2 \eta (\gamma - \eta) \Re \,_{\phi}\wv{\ket{k}_s\bra{k}}{\psi}^{w} + 
	(\gamma - \eta)^2 [|\,_{\phi}\wv{\ket{k}_s\bra{k}}{\psi}^{w}|^2 + \eta^2 (1 - 
	|\,_{\phi}\wv{\ket{k}_s\bra{k}}{\psi}^{w}|^2)]}{(\gamma^2 - \eta^2)[1 - (\gamma - \eta)^2 
	(1 - \sum_{m \in \{ 0,1 \}} |\,_{\phi}\wv{\ket{m}_s\bra{m}}{\psi}^{w}|^2)]}.
\end{equation}
Setting the parameters; 
\begin{equation}
	\gamma = \sqrt{\frac{1}{2} + \epsilon}, \ \ \ 
	\eta = \sqrt{\frac{1}{2} - \epsilon},
\end{equation}
one has 
\begin{align}
	R [k] & = \frac{ (1 - \epsilon) \Re \,_{\phi}\wv{\ket{k}_s\bra{k}}{\psi}^{w} + 
	\epsilon \left[ |\,_{\phi}\wv{\ket{k}_s\bra{k}}{\psi}^{w}|^2 + \left( \frac{1}{2} - \epsilon \right) 
	(1 - |\,_{\phi}\wv{\ket{k}_s\bra{k}}{\psi}^{w}|^2) \right]}{2 \left[ 1 - \epsilon^2  
	\left( 1 - \sum_{m \in \{ 0,1 \}} |\,_{\phi}\wv{\ket{m}_s\bra{m}}{\psi}^{w}|^2 \right) \right]} + O(\epsilon^2), \notag \\
	& = \frac{1}{2} \Re \,_{\phi}\wv{\ket{k}_s\bra{k}}{\psi}^{w} - \frac{\epsilon}{2} 
	\left( \Re \,_{\phi}\wv{\ket{k}_s\bra{k}}{\psi}^{w} - \frac{1}{2} 
	|\,_{\phi}\wv{\ket{k}_s\bra{k}}{\psi}^{w}|^2 \right) + O(\epsilon^2).
	\label{prydemethod}
\end{align}
From Eq.~(\ref{prydemethod}), it is possible to obtain the real part of the weak value from the first term and 
its imaginary part from the second term. Since the first order of the parameter $\epsilon$ is the 
gradient on changing the initial probe state from $\ket{\xi}_p = \frac{1}{\sqrt{2}}
( \ket{0}_p + \ket{1}_p )$, realistically, we can evaluate the imaginary part of the weak value 
from the gradient of the readout. This method is also used in Ref.~\cite{YYKI} on the joint weak value. 
It is emphasized that the weak value can be experimentally accessible by changing the initial probe 
state while the interaction is {\bf not weak}~\footnote{This point seems to be misunderstood. 
According to Ref.~\cite{Pryde2}, the violation of the Leggett-Garg inequality~\cite{LG} was shown, 
but the macroscopic realism cannot be denied since the noninvasive measurability is not realized.}.
\section{Weak Values for Arbitrary Coupling Quantum Measurement}
We just calculate an arbitrary coupling between the target and the probe systems~\cite{Wu_full,Nakamura_full,Koike_full}. 
Throughout this section, we assume that the desired observable is the projection operator to be denoted as $A^2 = A$~\cite{Shikano_full}. 
In the case of the von-Neumann interaction motivated by the original work~\cite{AAV}, when the pre- and post-selected 
states are $\ket{i}$ and $\ket{f}$, respectively, and the probe state is $\ket{\xi}$, the probe state $\ket{\xi^{\prime}}$ 
after the interaction given by $H_{int} = g A \hat{P}$ becomes
\begin{align}
	\ket{\xi^{\prime}} & = \bra{f} e^{-i g A \hat{P}} \ket{i} \ket{\xi} 
	= \bra{f} \left( 1 + \sum_{k=1}^{\infty} \frac{1}{k !}(- i g A \hat{P})^k \right) \ket{i} \ket{\xi} 
	= \bra{f} \left( 1 + A \sum_{k=1}^{\infty} \frac{1}{k !}(- i g \hat{P})^k \right) \ket{i} \ket{\xi} \notag \\
	& = \bra{f} \left( 1 - A + A \sum_{k=0}^{\infty} \frac{1}{k !} (- i g \hat{P})^k \right) \ket{i} \ket{\xi}
	= \bra{f} \left( 1 - A + A e^{-i g \hat{P}} \right) \ket{i} \ket{\xi} \notag \\
	& = \bkt{f}{i} \left( 1 - \wv{A}{w} + \wv{A}{w} e^{-i g \hat{P}} \right) \ket{\xi}. \label{full_order_eq}
\end{align}
It is remarked that the desired observable $B$, which satisfies $B^2 = 1$~\cite{Nakamura_full, Koike_full}, corresponds to $B = 2 A - 1$. 
Analogous to Theorem~\ref{Jozsaproof}, we can derive the expectation values of the position and the momentum after the weak measurement. 
These quantities depends on the weak value $\wv{A}{w}$ and the generating function for the position and the momentum of the initial probe state $\ket{\xi}$.
\section{Weak Value with Decoherence} \label{wvdec_sec}
The decoherence results from the coupled system to the environment and leads to 
the transition from the quantum to classical systems. The general framework of the 
decoherence was discussed in Sec.~\ref{qo_sec}. In this section, we discuss the analytical expressions 
for the weak value. 

While we directly discuss the weak value with decoherence, the weak value is defined 
as a complex number. To analogously discuss the density operator formalism, we need the 
operator associated with the weak value. Therefore, we define a {\it W operator} $W(t)$ as
\begin{equation} 
	W(t) := U(t,t_i)\ket{i}{\bra{f}}U(t_f,t).
	\label{defwo}
\end{equation} 
To facilitate the formal development of the weak value, we introduce the ket state 
$\ket{\psi(t)}$ and the bra state $\bra{\phi(t)}$ as
\begin{equation}
	\ket{\psi(t)}= U(t,t_i)\ket{i} , \ \bra{\phi(t)} =\bra{f} U(t_f,t),
\end{equation}
so that the expression for the W operator simplifies to
\begin{equation}
	W(t) = \kbt{\psi(t)}{\phi(t)}. \label{tanweak}
\end{equation} 
By construction, the two states $\ket{\psi(t)}$ and $\bra{\phi(t)}$ satisfy the 
Schr\"{o}dinger equations with the same Hamiltonian with the initial and final 
conditions $\ket{\psi(t_i)}=\ket{i}$ and $\bra{\phi(t_f)}=\bra{f}$. In a sense, 
$\ket{\psi(t)}$ evolves forward in time while $\bra{\phi(t)}$ evolves backward in time. 
The time reverse of the W operator (\ref{tanweak}) is $W^{\dagger} = \kbt{\phi (t)}{\psi (t)}$. 
Thus, we can say the W operator is based on the two-state vector formalism 
formally described in Refs.~\cite{AV90, AV91}. 
Even an apparently similar quantity to the W operator (\ref{tanweak}) was introduced by 
Reznik and Aharonov~\cite{Reznik95} in the name of ``two-state" with the conceptually 
different meaning. This is because the W operator acts on a Hilbert space 
$\HH$ but the two-state vector acts on the Hilbert space 
$\overrightarrow{\HH_1} \otimes \overleftarrow{\HH_2}$. Furthermore, while the generalized 
two-state, which is called a multiple-time state, was introduced~\cite{APTV}, 
this is essentially reduced to the two-state vector formalism.
The W operator gives the weak value of the observable $A$~\footnote{While the original notation 
of the weak values is $\wv{A}{w}$ indicating the ``w"eak value of an observable $A$, 
our notation is motivated by one of which the pre- and post-selected states are 
explicitly shown as $\,_{f} \wv{A}{i}^{w}$.} as
\begin{equation}
	\wv{A}{W} = \frac{\Tr (WA)}{\Tr W}, 
\end{equation}
in parallel with the expectation value of the observable $A$ by 
\begin{equation}	
	\Ex [A] = \frac{\Tr (\rho A)}{\Tr \rho} 
\end{equation}
from Born's rule. Furthermore, the W operator (\ref{defwo}) can be regarded as 
a special case of a standard purification of the density operator~\cite{UHLMANN86}. 
In our opinion, the W operator should be considered on the same footing of the 
density operator. For a closed system, both satisfy the Schr{\"o}dinger equation. 
In a sense, the W operator $W$ is the square root of the density operator since 
\begin{equation}
	W(t)W^{\dagger}(t) = \ket{\psi(t)}\bra{\psi(t)} = U(t,t_i) \kb{i} U^{\dagger}(t,t_i),
\end{equation} 
which describes a state evolving forward in time for a given initial state $\ket{\psi(t_i)}\bra{\psi(t_i)}=\ket{i}\bra{i}$, while
\begin{eqnarray}
	W^{\dagger}(t)W(t) = \ket{\phi(t)}\bra{\phi(t)} = U(t_f,t) \kb{f} U^{\dagger}(t_f,t),
\end{eqnarray} 
which describes a state evolving backward in time for a given final state 
$\ket{\phi(t_f)}\bra{\phi(t_f)}=\ket{f}\bra{f}$. The W operator describes the entire history of 
the state from the past ($t_i$) to the future ($t_f$) and measurement 
performed at the time $t_0$ as we shall see in Appendix~\ref{WMR}. 
This description is conceptually different from the conventional one by the 
time evolution of the density operator. From the viewpoint of geometry, the W operator can be taken 
as the Hilbert-Schmidt bundle. The bundle projection is given by 
\begin{equation} 
	\Pi: W(t) \to \rho_i (t) := W(t) W^{\dagger}(t).
\end{equation}
When the dimension of the Hilbert space is $N$: ${\rm dim} \HH = N$, the structure 
group of this bundle is $U(N)$~\cite[Sec. 9.3]{Bengtsson}. 
Therefore, the W operator has richer information than the density 
operator formalism as we shall see a typical example of a geometric phase~\cite{SH}.
Furthermore, we can express the probability to get the measurement outcome $a_n \in A$ due to 
the ABL formula (\ref{ablrule}) using the W operator $W$ as 
\begin{equation}
	\Pr [A = a_n] = \frac{|\Tr W P_{a_n}|^2}{\sum_n |\Tr W P_{a_n}|^2},
\end{equation}
where $A = \sum_n a_n \kb{a_n} =: \sum_n a_n P_{a_n}$. This shows the usefulness of the W operator.

Let us discuss a state change in terms of the W operator and define a map ${\cal X}$ as 
\begin{equation}
	{\cal X} (\ket{\alpha}, \ket{\beta}) := (\EE \otimes \ID) \left( \kbt{\alpha}{\beta} \right),
	\label{weakop}\\
\end{equation}
for an arbitrary $\ket{\alpha}, \ket{\beta} \in \HH_s \otimes \HH_e$. Then, we obtain the following theorem on the change of the W operator such as Theorem \ref{1th}.
\begin{thm}
For any W operator $W = \ket{\psi (t)}_{s} \bra{\phi (t)}$, we expand 
\begin{equation}
	\ket{\psi (t)}_{s} = \sum_{m} \psi_{m} \ket{\alpha_m}_{s}, \ \ket{\phi (t)}_{s} = 
	\sum_{m} \phi_{m} \ket{\beta_m}_{s}, 
\end{equation}
with fixed complete orthonormal sets $\{\ket{\alpha_m}_{s} \}$ and $\{\ket{\beta_m}_{s} \}$.
Then, a change of the W operator can be written as  
\begin{equation}
	\EE \left( \ket{\psi (t)}_{s} \bra{\phi (t)} \right)= \, _{e}\bra{\tilde{\psi} (t)} {\cal X} 
	(\ket{\alpha}, \ket{\beta}) \ket{\tilde{\phi} (t)}_{e},
	\label{tvwo}
\end{equation}
where
\begin{equation}
	\ket{\tilde{\psi} (t)}_{e} = \sum_{k} \psi^{\ast}_{k} \ket{\alpha_k}_{e}, \ 
	\ket{\tilde{\phi} (t)}_{e} = \sum_{k} \phi^{\ast}_{k} \ket{\beta_k}_{e}, \label{tildetwo}
\end{equation}
and $\ket{\alpha}$ and $\ket{\beta}$ are maximally entangled states defined by 
\begin{equation}
	\ket{\alpha} := \sum_m \ket{\alpha_m}_s \ket{\alpha_m}_e, \ 
	\ket{\beta} := \sum_m \ket{\beta_m}_s \ket{\beta_m}_e.
\end{equation}
Here, $\{\ket{\alpha_m}_{e} \}$ and $\{\ket{\beta_m}_{e} \}$ are complete orthonormal sets corresponding to $\{\ket{\alpha_m}_{s} \}$ and $\{\ket{\beta_m}_{s} \}$, respectively.
\label{e2th}
\end{thm}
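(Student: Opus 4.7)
The plan is to mimic the proof of Theorem~\ref{1th}, which was essentially a ``Jamio{\l}kowski-style'' identity using a single maximally entangled state; here one simply needs two such states, one adapted to the ket side and one to the bra side of the $W$ operator, and then rely on linearity of $\EE$ to recombine them. Because $\EE$ is a CP map acting on $\LLL(\HH_s)$ and the bra part $\bra{\phi(t)}$ is \emph{not} acted on nontrivially (it just sits inside $\EE$ as one half of the argument), the two expansions can be handled independently.

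First, I would compute ${\cal X}(\ket{\alpha},\ket{\beta})=(\EE\otimes\ID)(\kbt{\alpha}{\beta})$ explicitly in the fixed bases. Using
\begin{equation}
\kbt{\alpha}{\beta}=\sum_{m,n}\ket{\alpha_m}_s\bra{\beta_n}\otimes\kbt{\alpha_m}{\beta_n}_e,
\end{equation}
and the fact that $\ID$ acts trivially on the environment factor, one obtains
\begin{equation}
{\cal X}(\ket{\alpha},\ket{\beta})=\sum_{m,n}\EE(\ket{\alpha_m}_s\bra{\beta_n})\otimes\kbt{\alpha_m}{\beta_n}_e.
\end{equation}
Sandwiching with $\,_e\bra{\alpha_k}\cdots\ket{\beta_l}_e$ and using orthonormality of $\{\ket{\alpha_m}_e\}$ and $\{\ket{\beta_n}_e\}$ then collapses the double sum to the single matrix element $\EE(\ket{\alpha_k}_s\bra{\beta_l})$, which is the exact analogue of the identity $\,_e\bra{m}\KK(\ket{\alpha})\ket{n}_e=\EE(\ket{m}_s\bra{n})$ used in Theorem~\ref{1th}.

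Next, using the definitions of $\ket{\tilde\psi(t)}_e=\sum_k\psi_k^\ast\ket{\alpha_k}_e$ and $\ket{\tilde\phi(t)}_e=\sum_l\phi_l^\ast\ket{\beta_l}_e$, the sandwich $\,_e\bra{\tilde\psi(t)}{\cal X}(\ket{\alpha},\ket{\beta})\ket{\tilde\phi(t)}_e$ becomes $\sum_{k,l}\psi_k\phi_l^\ast\,\EE(\ket{\alpha_k}_s\bra{\beta_l})$. The complex conjugation built into the tildes is precisely what converts the environment-side bra coefficients back into system-side ket/bra coefficients. Linearity of $\EE$ then pulls the sum inside, giving
\begin{equation}
\EE\!\left(\sum_{k}\psi_k\ket{\alpha_k}_s\sum_{l}\phi_l^\ast\,_s\bra{\beta_l}\right)=\EE\!\left(\ket{\psi(t)}_s\bra{\phi(t)}\right),
\end{equation}
which is the claim.

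No genuine obstacle appears, and in particular positivity or complete positivity of $\EE$ is not needed for the identity itself (unlike Theorem~\ref{1th}, where complete positivity was used to deduce the operator-sum form from Eq.~(\ref{complete})). The only point demanding care is bookkeeping: one must keep the two bases $\{\ket{\alpha_m}\}$ and $\{\ket{\beta_m}\}$ strictly separate on the ket and bra sides, and ensure that the complex conjugations in the tilded states in Eq.~(\ref{tildetwo}) line up with the coefficients $\psi_k,\phi_l^\ast$ appearing in $\ket{\psi(t)}_s\bra{\phi(t)}$. Once this is done, the result follows in two lines exactly as in Theorem~\ref{1th}.
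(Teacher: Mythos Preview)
Your proposal is correct and is exactly the argument the paper has in mind: the paper does not spell out a proof but simply states that it is ``completely parallel to that of Theorem~\ref{1th},'' and what you have written is precisely that parallel, with the single maximally entangled state of Theorem~\ref{1th} replaced by the pair $\ket{\alpha},\ket{\beta}$ adapted to the ket and bra bases. Your remark that only linearity of $\EE$ (not complete positivity) is used for this identity is also accurate and worth noting.
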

The proof is completely parallel to that of Theorem \ref{1th}.
\begin{thm}
	For any W operator $W = \ket{\psi (t)}_{s} \bra{\phi (t)}$, given the CP map $\EE$, the operator-sum 
	representation is written as 
	\begin{equation}
		\EE(W)=\sum_{m} E_{m}WF^{\dagger}_{m}, 
		\label{wk}
	\end{equation}
	where $E_m$ and $F_m$ are the Kraus operators.
\end{thm}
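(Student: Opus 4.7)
The plan is to mirror the derivation of the operator-sum representation for density operators (the chain of equations running through Eq.~(\ref{complete}) in the proof of Theorem~\ref{1th}), but applied to the W operator via Theorem~\ref{e2th}. The starting point is Eq.~(\ref{tvwo}), which already expresses $\EE(W)$ as a sandwich of the map ${\cal X}(\ket{\alpha},\ket{\beta})$ between the conjugate environment states $\bra{\tilde{\psi}(t)}_e$ and $\ket{\tilde{\phi}(t)}_e$. All that remains is to decompose ${\cal X}(\ket{\alpha},\ket{\beta})$ into an outer-product sum and then read off the Kraus operators acting on the ket and bra factors of $W$ separately.

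Concretely, I would proceed in three steps. First, I would argue that because $\EE$ is CP, its trivial extension $\EE\otimes\ID$ applied to $\kbt{\alpha}{\beta}$ admits a decomposition
\begin{equation}
{\cal X}(\ket{\alpha},\ket{\beta}) \;=\; \sum_m \kbt{s_m}{t_m},
\end{equation}
for some vectors $\ket{s_m},\ket{t_m}\in\HH_s\otimes\HH_e$. This is the analogue of Eq.~(\ref{complete}). Second, in parallel with the definition of the Kraus operators given just after Eq.~(\ref{complete}), I would introduce two families of operators on $\HH_s$ by
\begin{equation}
E_m\ket{\psi(t)}_s \,:=\, {}_e\!\bkt{\tilde{\psi}(t)}{s_m}, \qquad F_m\ket{\phi(t)}_s \,:=\, {}_e\!\bkt{\tilde{\phi}(t)}{t_m},
\end{equation}
where the conjugate environment states are those defined in Eq.~(\ref{tildetwo}). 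Third, substituting the decomposition of ${\cal X}$ back into Eq.~(\ref{tvwo}) and collecting factors gives
\begin{equation}
\EE(W) \,=\, \sum_m {}_e\!\bkt{\tilde{\psi}(t)}{s_m}\,{}_e\!\bkt{t_m}{\tilde{\phi}(t)} \,=\, \sum_m \bigl(E_m\ket{\psi(t)}_s\bigr)\bigl(F_m\ket{\phi(t)}_s\bigr)^{\dagger} \,=\, \sum_m E_m W F_m^{\dagger},
\end{equation}
which is precisely Eq.~(\ref{wk}).

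The main obstacle is step one. In the density-operator case, positivity of $\KK(\ket{\alpha})$ allowed the spectral theorem to supply the decomposition in Eq.~(\ref{complete}) immediately; here, however, ${\cal X}(\ket{\alpha},\ket{\beta})=(\EE\otimes\ID)(\kbt{\alpha}{\beta})$ is in general not Hermitian, let alone positive, so no direct spectral argument is available. I would handle this by appealing to the Kraus form of $\EE$ itself (already established by the preceding operator-sum theorem for density operators, extended by complex linearity to arbitrary operators): writing $\EE(\cdot)=\sum_m E_m(\cdot)E_m^{\dagger}$ yields ${\cal X}(\ket{\alpha},\ket{\beta})=\sum_m (E_m\otimes\ID)\kbt{\alpha}{\beta}(E_m^{\dagger}\otimes\ID)$, which exhibits the required outer-product structure with $\ket{s_m}=(E_m\otimes\ID)\ket{\alpha}$ and $\ket{t_m}=(E_m\otimes\ID)\ket{\beta}$. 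The freedom to choose the Schmidt-type bases $\{\ket{\alpha_m}_s\}$ and $\{\ket{\beta_m}_s\}$ independently in Theorem~\ref{e2th} is precisely what produces two \emph{a priori} distinct families of Kraus operators $E_m$ and $F_m$, rather than a single family as in the density-operator case.
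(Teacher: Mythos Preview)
Your argument is correct but takes a different route from the paper. The paper obtains the outer-product decomposition of ${\cal X}(\ket{\alpha},\ket{\beta})$ by a \emph{polar decomposition}: it writes ${\cal X}=\KK u$ with $\KK$ the positive operator of Eq.~(\ref{sig}) and $u$ unitary, then uses the spectral decomposition~(\ref{complete}) of $\KK$ to get ${\cal X}=\sum_m\kb{s_m}u=\sum_m\kbt{s_m}{t_m}$ with $\bra{t_m}:=\bra{s_m}u$. You instead bypass this entirely by invoking the already-established Kraus form $\EE(\cdot)=\sum_m E_m(\cdot)E_m^{\dagger}$ and extending it by complex linearity to the non-Hermitian argument $W$. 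This is more economical and perfectly valid, but note a consequence: with your explicit choice $\ket{s_m}=(E_m\otimes\ID)\ket{\alpha}$ and $\ket{t_m}=(E_m\otimes\ID)\ket{\beta}$, a short computation shows that the two families you define both reduce to the \emph{same} Kraus operators, i.e.\ $F_m=E_m$. So your closing remark that the independent Schmidt bases are ``precisely what produces two a priori distinct families'' is not quite right in your construction; what genuinely decouples $E_m$ from $F_m$ in the paper's argument is the unitary $u$ from the polar decomposition, which is the ingredient you have dropped. Since the theorem as stated only asserts existence of some $E_m,F_m$, your proof with $F_m=E_m$ still establishes it; the paper's route buys the extra structure (two families related by a unitary) that is later matched to the forward and backward evolutions in Eq.~(\ref{kraus2}).
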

It is noted that, in general, 
$\EE (W) \EE(W^{\dagger}) \neq \EE(\rho)$ although $\rho = WW^{\dagger}$.
\begin{proof}
We take the polar decomposition of the map $X$ to obtain
\begin{equation}
	{\cal X} = \KK u,
\end{equation}
noting that 
\begin{equation}
	{\cal X} {\cal X}^{\dagger} = \KK u u^{\dagger} \KK = \KK^2.
\end{equation}
The unitary operator $u$ is well-defined on $\HH_s \otimes \HH_e$ because $\KK$ 
defined in Eq. (\ref{sig}) is positive. This is a crucial point to obtain this result (\ref{wk}), 
which is the operator-sum representation for the quantum operation of the W operator. 
From Eq. (\ref{complete}), we can rewrite ${\cal X}$ as 
\begin{equation}
	{\cal X} = \sum_{m} \kb{s_m} u 
	= \sum_{m} \kbt{s_m}{t_m},
\end{equation}
where
\begin{equation}
	\bra{t_{m}} = \bra{s_m} u.
\end{equation}
Similarly to the Kraus operator (\ref{krausreview}), we define the two operators, $E_m$ 
and $F^{\dagger}_{m}$, as 
\begin{equation}
	E_{m} \ket{\psi (t)}_s := \,_{e}\bkt{\tilde{\psi} (t)}{s_m}, \ \ \ \,_{s}\bra{\phi (t)} F^{\dagger}_{m} := \bkt{t_{m}}{\tilde{\phi} (t)}_{e},
\end{equation}
where $\ket{\tilde{\psi} (t)}_e$ and $\ket{\tilde{\phi} (t)}_e$ are defined in Eq. (\ref{tildetwo}).
Therefore, we obtain the change of the W operator as 
\begin{align}
	\sum_{m} E_{m} \ket{\psi (t)}_s \bra{\phi (t)} F^{\dagger}_{m} 
	& = \sum_{m} \,_{e}\bkt{\tilde{\psi}(t)}{s_m} \bkt{t_{m}}{\tilde{\phi}(t)}_{e} 
	= \,_{e}\bra{\tilde{\psi}(t)} {\cal X} \ket{\tilde{\phi}(t)}_{e} \notag \\
	& = \EE \left( \ket{\psi (t)}_s \bra{\phi (t)} \right),
\end{align}
using Theorem \ref{e2th} in the last line. By linearity, we obtain the desired result.
\end{proof}

Summing up, we have introduced the W operator (\ref{defwo}) and obtained the general 
form of the quantum operation of the W operator (\ref{wk}) in an analogous way to 
the quantum operation of the density operator assuming the complete positivity 
of the physical operation. This can be also described from information-theoretical 
approach~\cite{Chiribella09} to solve the open problem listed in Ref.~\cite[Sec. XII]{APTV}. 
However, this geometrical meaning has still been an open problem.

It is well established that the trace preservation, $\Tr(\EE(\rho))= \Tr \rho = 1$ for all $\rho$, 
implies that $\sum_{m}E^{\dagger}_{m}E_{m}=1$. As discussed in Eq. (\ref{preconsist}), the proof goes through as  
\begin{equation}
1 =\Tr(\EE(\rho)) = \Tr \left( \sum_{m}E_{m}\rho E^{\dagger}_{m} \right) 
= \Tr \left( \sum_{m}E^{\dagger}_{m}E_{m}\rho \right) \; ( \forall \rho ).
\end{equation}
This argument for the density operator $\rho= WW^{\dagger}$ applies also for $W^{\dagger} W$ 
to obtain $\sum_{m}F^{\dagger}_{m}F_{m}=1$ because this is the density operator in the time 
reversed world in the two-state vector formulation as reviewed in Sec.~\ref{wvhis_sec}. 
Therefore, we can express the Kraus operators, 
\begin{equation}
	E_{m} = \,_e\bra{e_m}U\ket{e_i}_e, \ F_{m}^{\dagger} = \,_e\bra{e_f}V\ket{e_m}_e, \label{kraus2}
\end{equation}
where 
\begin{equation}
	U = U (t, t_i), \ V = U(t_f,t), \label{twounitary}
\end{equation}
are the evolution operators, which act on $\HH_s \otimes \HH_e$. 
$\ket{e_i}$ and $\ket{e_f}$ are some basis vectors and  $\ket{e_m}$ is 
a complete set of basis vectors with $\sum_{m} \kb{e_m} = 1$. We can compute 
\begin{equation}
\sum_{m}F^{\dagger}_{m}E_{m} = \sum_{m} \,_e\bra{e_f}V\ket{e_m}_e \bra{e_m}U\ket{e_i}_e 
= \,_e\bra{e_f}VU\ket{e_i}_e.
\label{Sma}
\end{equation}
The above equality (\ref{Sma}) may be interpreted as a decomposition of the history in analogy to the decomposition of unity because
\begin{equation}
	\,_e\bra{e_f}VU\ket{e_i}_e = \, _e\bra{e_f}S\ket{e_i}_e = S_{fi} 
	\label{smatrix}
\end{equation}
is the S-matrix element. On this idea, Ojima and Englert have developed the formulation on the S-matrix 
in the context of the algebraic quantum field theory~\cite{Ojima} and the backaction of the Hawking 
radiation~\cite{Englert10}, respectively.
\section{Weak Measurement with Environment} \label{known}
Let us consider a target system coupled with an environment and a general weak measurement 
for the compound of the target system and the environment. We assume that there is no interaction 
between the probe and the environment and the same interaction between the target and probe systems 
(\ref{wmint}). The Hamiltonian for the target system and the environment is given by
\begin{equation}
	H = H_0 \otimes \ID_e + H_1,
\end{equation}
where $H_0$ acts on the target system $\HH_s$ and the identity operator $\ID_e$ is 
for the environment $\HH_e$, while $H_1$ acts on $\HH_s\otimes \HH_e $. 
The evolution operators $U := U(t,t_i)$ and $V := U(t_f,t)$ as defined 
in Eq. (\ref{twounitary}) can be expressed by
\begin{equation}
	U = U_{0} K(t_0,t_i), \ 
	V = K(t_f,t_0) V_{0},
\end{equation}
where $U_{0}$ and $V_{0}$ are the evolution operators forward in time and backward in time, 
respectively, by the target Hamiltonian $H_0$.
$K$'s are the evolution operators in the interaction picture,
\begin{equation}
	 K(t_0,t_i) = \TT e^{-i\int^{t_0}_{t_i} dt U_{0}^{\dagger} H_1 U_{0}}, \ 
	 K(t_f,t_0) = \overline{\TT} e^{-i\int^{t_f}_{t_0} dt V_{0} H_1 V_{0}^{\dagger}},
\end{equation}
where $\TT$ and $\overline{\TT}$ stand for the time-ordering and anti time-ordering products.

Let the initial and final environmental states be $\ket{e_i}$ and $\ket{e_f}$, respectively. 
The probe state now becomes
\begin{equation}
	\ket{\xi^\prime} = \bra{f}\bra{e_f}VU\ket{e_i}\ket{i} \left( \ID - g\frac{\bra{f}\bra{e_f}VAU\ket{e_i}
	\ket{i}}{\bra{f}\bra{e_f}VU\ket{e_i}\ket{i}} \hat{P} + O (g^2) \right) \ket{\xi}.
\end{equation}
Plugging the expressions for $U$ and $V$ into the above, we obtain the probe state as
\begin{equation}
	\ket{\xi^\prime} = N \xi \left( \ID - g \frac{\bra{f}\bra{e_f}K(t_f,t_0)
	V_{0}AU_{0}K(t_0,t_i)\ket{e_i}\ket{i}}{N} \hat{P} \right) \ket{\xi} + O (g^2) ,
\end{equation}
where $N= \bra{f}\bra{e_f}K(t_f,t_0)V_{0}U_{0}K(t_0,t_i)\ket{e_i}\ket{i}$
is the normalization factor. We define the dual quantum operation as
\begin{equation}
	\EE^{\ast}(A) :=\bra{e_f}K(t_f,t_0)V_{0}AU_{0}K(t_0,t_i)\ket{e_i} =\sum_{m} V_{0} F^{\dagger}_{m}A E_m U_{0},
\end{equation}
where 
\begin{align}
	F^{\dagger}_{m} & := V^{\dagger}_{0}\bra{e_f}K(t_f,t_0)\ket{e_m}V_0, \\
	E_m & := U_{0}\bra{e_m}K(t_0,t_i)\ket{e_i}U^{\dagger}_{0}
\end{align}
are the Kraus operators. Here, we have inserted the completeness 
relation $\sum_{m}\ket{e_m}\bra{e_m}=1$ with $\ket{e_m}$ being not necessarily orthogonal. 
The basis $\ket{e_i}$ and $\ket{e_f}$ are the initial and final environmental states, 
respectively.  
Thus, we obtain the wave function of the probe as
\begin{align}
\ket{\xi^\prime} & = N \left( \ID -g\frac{\bra{f}\EE^{*}(A)\ket{i}}{N} \hat{P} \right) \ket{\xi} 
+ O (g^2) \notag \\ 
&= N \left( \ID - g \frac{\sum_{m} \bra{f}V_0 F^{\dagger}_{m} A E_{m} U_0 \ket{i}}{\sum_{m} 
\bra{f} V_0 F^{\dagger}_m E_m U_0 \ket{i}} \hat{P} \right) \ket{\xi} + O (g^2) \notag \\
& = N \left( \ID - g \frac{\Tr \left[ A \sum_m E_m U_0 \kbt{i}{f} 
V_0 F^{\dagger}_m \right]}{\Tr \left[ \sum_m E_m U_0 \kbt{i}{f} V_0 F^{\dagger}_m \right]} \hat{P} 
\right) \ket{\xi} + O(g^2) \notag \\
& = N \left( \ID - g \frac{\Tr[\EE(W)A]}{\Tr [\EE (W)]} \hat{P} \right) \ket{\xi} + O(g^2)= 
N (\ID -g\wv{A}{\EE(W)} \hat{P}) \ket{\xi} + O(g^2),
\end{align}
Analogous to Theorem~\ref{Jozsaproof}, 
the shift of the expectation value of the position operator on the probe is
\begin{equation} \label{eshiftq}
	\Delta[Q] = g \cdot \Re[\wv{A}{\EE(W)}] + m g \cdot \Im[\wv{A}{\EE(W)}] \left. \frac{d 
	\Var [Q]}{dt} \right|_{t=t_0}.
\end{equation} 
From an analogous discussion, we obtain the shift of the expectation value of 
the momentum operator on the probe as 
\begin{equation} \label{eshiftp}
	\Delta[P] = 2 g \cdot \Var [P] \cdot \Im[\wv{A}{\EE(W)}].
\end{equation}
Thus, we have shown that the probe shift in the weak measurement is exactly given 
by the weak value defined by the quantum operation of the W operator due to the environment.
\section{Summary}
We have reviewed that the weak value is defined independent of the weak measurement in the original idea~\cite{AAV} and  
have explained its properties. Furthermore, to extract the weak value, we have constructed some measurement model to extract the weak value. 
I hope that the weak value becomes the fundamental quantity to describe quantum mechanics and quantum field theory and has practical advantage in the quantum-mechanical world. 
\section*{Acknowledgment}
The author acknowledges useful collaborations and discussion with Akio Hosoya, Yuki Susa, and Shu Tanaka. The author thanks Yakir Aharonov, Richard Jozsa, Sandu Popescu, Aephraim Steinberg, and Jeff Tollaksen for useful discussion. The author would like to thank the use of the utilities of Tokyo Institute of Technology and Massachusetts Institute of Technology and many technical and secretary supports. The author is grateful to the financial supports from JSPS Research Fellowships for Young Scientists (No. 21008624), JSPS Excellent Young Researcher Overseas Visit Program, Global Center of Excellence Program ``Nanoscience and Quantum Physics" at Tokyo Institute of Technology during his Ph.D study.

\end{document}